\providecommand{\U}[1]{\protect\rule{.1in}{.1in}}
\newtheorem{theorem}{Theorem}
\newtheorem{proposition}[theorem]{Proposition}
\newenvironment{proof}[1][Proof]{\noindent\textbf{#1.} }{\ \rule{0.5em}{0.5em}}
\let\pdfoutput=\undefined\fi
\begin{document}

\title{\textbf{Imperfect Credibility \textit{versus} No Credibility of Optimal
Monetary Policy}\thanks{We thank two anonymous referees for very useful
comments, Catherine Refait-Alexandre, Laurent Weill and participants to our
session at the 36th GDRE symposium on Money Banking and Finance in
Besan\c{c}on, June 2019. PSE thanks the support of the EUR grant
ANR-17-EURE-001. }}
\author{Jean-Bernard Chatelain\thanks{Paris School of Economics, Universit\'{e} Paris
1 Pantheon Sorbonne, PjSE, 48 Boulevard Jourdan, 75014 Paris. Email:
jean-bernard.chatelain@univ-paris1.fr}\ \quad Kirsten Ralf\thanks{ESCE
International Business School INSEEC U Research Center, 10 rue Sextius Michel,
75015 Paris, Email: Kirsten.Ralf@esce.fr.}}
\maketitle

\begin{abstract}
A minimal central bank credibility, with a non-zero probability of not
renegning his commitment ("quasi-commitment"), is a necessary condition for
anchoring inflation expectations and stabilizing inflation dynamics. By
contrast, a complete lack of credibility, with the certainty that the policy
maker will renege his commitment ("optimal discretion"), leads to the local
instability of inflation dynamics. In the textbook example of the
new-Keynesian Phillips curve, the response of the policy instrument to
inflation gaps for optimal policy under quasi-commitment has an opposite sign
than in optimal discretion, which explains this bifurcation.

\textbf{JEL\ classification numbers}: C61, C62, E31, E52, E58.

\textbf{Keywords:} Ramsey optimal policy, Imperfect commitment, Discretion,
New-Keynesian Phillips curve, Imperfect credibility.

\textbf{POSTPRINT published in: }\textit{Revue Economique, }(2021), 72(1),
43-53.\bigskip

\textbf{Titre en fran\c{c}ais}: \textbf{Cr\'{e}dibilit\'{e} imparfaite ou
absence de cr\'{e}dibilit\'{e} de la politique mon\'{e}taire optimale.}

\textbf{R\'{e}sum\'{e}}:

Une cr\'{e}dibilit\'{e} minimale de la banque centrale, avec une
probabilit\'{e} non-nulle de ne pas revenir sur son engagement, est une
condition n\'{e}cessaire pour ancrer les anticipations d'inflation et garantir
le retour de l'inflation vers sa cible de long terme. En revanche, l'absence
compl\`{e}te de cr\'{e}dibilit\'{e}, avec la certitude que la banque centrale
reviendra sur son engagement ("discr\'{e}tion optimale"), implique une
bifurcation de la dynamique de l'inflation vers des trajectoires
d\'{e}flationnistes ou hyper-inflationnistes. Dans l'exemple de la courbe de
Phillips des nouveaux Keyn\'{e}siens, la politique optimale \`{a} tr\`{e}s
faible cr\'{e}dibilit\'{e} impose une r\'{e}ponse de l'instrument de politique
mon\'{e}taire aux \'{e}carts de l'inflation de signe oppos\'{e} \`{a} la
politique de discr\'{e}tion optimale, ce qui explique cette bifurcation.

\textbf{Mots-cl\'{e}s:} Politique optimale \`{a} la Ramsey, discr\'{e}tion,
cr\'{e}dibilit\'{e} imparfaite, engagement limit\'{e}, Courbe de Phillips des
nouveaux Keyn\'{e}siens.\newpage

\end{abstract}

\section{INTRODUCTION}

A key determinant of the efficiency of stabilization policy is the degree of
the credibility of policy makers, measured by their probability of not
reneging their commitment or the probability of a change of the head of the
central bank. This paper shows that the equilibrium in a model of optimal
policy under quasi-commitment (Schaumburg and Tambalotti [2007], Fujiwara, Kam
and\ Sunakawa [2019]) is completely different from the discretion equilibrium
with no commitment and more relevant for policy makers, even if the
probability of non-reneging tends to zero (near-zero credibility). This latter
discretion equilibrium, however, is presented as a relevant benchmark
equilibrium in numerous papers since Clarida, Gali and Gertler [1999], for
example in Gali [2015].

In the discretion equilibrium, each period-specific policy maker does a static
optimization ignoring expectations, whereas the monetary policy transmission
mechanism includes dynamics related to private sector expectations. Phillips,
already in 1954, warned about the dramatic errors of static optimization when
the underlying transmission mechanism is dynamic:

\begin{quotation}
The time path of income, production and employment during the process of
adjustment is not revealed. It is quite possible that certain types of policy
may give rise to undesired fluctuations, or even cause a previously stable
system to become unstable, although the final equilibrium position as shown by
a static analysis appears to be quite satisfactory (Phillips [1954], p.~290).
\end{quotation}

In discretion, the policy maker's rule, which is optimal for a static model,
leads to a sub-optimal positive feedback mechanism once expectations are taken
into account. It results in a locally unstable equilibrium in the space of
inflation and of the cost-push shock. Therefore, discretion equilibrium
requires with an infinite precision the knowledge of the parameters of the
monetary policy transmission in order to force an exact correlation between
inflation and the cost-push shock. In practice, this perfect knowledge never
occurs, so that the probability to shift to inflation or deflation spirals is
equal to one in the discretion equilibrium.

When changing the policy from commitment or quasi-commitment to discretion the
qualitative properties of the dynamic system change dramatically. The
probability of reneging commitment serves as a bifurcation parameter, the
bifurcation occurs when going from probability zero to any probability, even
\emph{infinitesimal} small, larger than zero. With quasi-commitment, the
private sector's expectations of inflation are taken into account in the
equilibrium, even with an extremely low probability of not reneging
commitment. The policy maker's negative feedback rule is such that the policy
instrument responds with an opposite sign to inflation as in discretion
equilibrium. This ensures the locally stable dynamics of inflation. The
inflation auto-correlation (or growth factor) parameter shifts from above one
(discretion) to below one (quasi-commitment). Shifting from discretion
equilibrium to quasi-commitment corresponds to a saddle-node bifurcation of
inflation dynamics.

However, the existing literature did not mention the bifurcation when shifting
from quasi-commitment to discretion equilibrium. Schaumburg and Tambalotti's
([2007], p.304) statement that "\emph{quasi-commitment converges to full
commitment for [the probability of reneging commitment tends to zero]}" is
valid. But their second statement that "\emph{it also converges to discretion
when [the probability of reneging commitment tends to one]}" is not valid, as
demonstrated in this paper. Schaumburg and Tambalotti's ([2007], figure 4,
p.318) numerical examples suggest that their second statement is likely to be
false. There is a gigantic gap between the initial jumps of inflation which is
nearly the \emph{double} for discretion ($4.9$) as compared to optimal policy
under quasi-commitment ($2.5$) for the lowest numerical value of the
probability of not reneging commitment that they have chosen. Because of this
large gap, Schaumburg and Tambalotti [2007] conclude that "\emph{most of gains
of commitment accrue at relatively low levels of credibility}". We show that
most of the large gains of commitment accrue at extremely low levels of
credibility. Finally, the impulse response functions, welfare losses and
initial anchors (or jumps) of inflation are much larger with discretion than
with near-zero credibility.

Section 2 presents Ramsey optimal policy under imperfect commitment and
section 3 discretionary policy. Section 4 demonstrates the existence of a
bifurcation and calculates policy rule parameters comparing near-zero
credibility \emph{versus} discretion. Section 5 compares initial anchors of
inflation on the cost-push shock, impulse response functions and welfare for
near-zero credibility \emph{versus} discretion. The last section concludes.

\section{RAMSEY\ OPTIMAL\ POLICY\ UNDER\ QUASI-COMMITMENT}

Following Schaumburg and Tambalotti [2007], we assume that the mandate to
minimize the loss function is delegated to a sequence of policy makers
(indexed by $j$, $k$,...) with a commitment of random duration. The length of
their tenure depends on a sequence of exogenous independently and identically
distributed Bernoulli signals $\left\{  \eta_{t}\right\}  _{t\geq0}$ with
$E_{t}\left[  \eta_{t}\right]  _{t\geq0}=1-q$, with $0<q\leq1$. The case $q=0$
of discretionary policy is treated separately in the next section. If
$\eta_{t}=1,$ a new policy maker takes office at the beginning of time $t$ and
is not committed to the policy of his/her predecessor. Otherwise, the
incumbent stays on. A higher probability of not reneging commitment $q$ can be
interpreted as a higher credibility. This leads to use a "credibility
adjusted" discount factor $\beta q$ in the policy maker's optimal behavior. A
policy maker with little credibility does not give a large weight on future
welfare losses.

At the start of his tenure, policy maker $j$ solves the following problem,
where subscript $k$ corresponds to the new policy maker. Welfare is maximized
subject to the new-Keynesian Phillips curve with slope $\kappa$ and subject to
the auto-correlation $\rho$ of a cost-push shock $u_{t}$ and the constant
variance $\sigma_{u}^{2}$ of its identically and independently normally
distributed disturbances $\eta_{u,t}$:%

\begin{align}
V^{jj}\left(  u_{0}\right)   &  =E_{0}%
{\displaystyle\sum\limits_{t=0}^{t=+\infty}}
\left(  \beta q\right)  ^{t}\left[  -\frac{1}{2}\left(  \pi_{t}^{2}%
+\frac{\kappa}{\varepsilon}x_{t}^{2}\right)  +\beta\left(  1-q\right)
V^{jk}\left(  u_{t}\right)  \right] \\
\text{s.t. }\pi_{t}  &  =\kappa x_{t}+\beta qE_{t}\pi_{t+1}+\beta\left(
1-q\right)  E_{t}\pi_{t+1}^{k}+u_{t}\text{ (Lagrange multiplier }\gamma
_{t+1}\text{)}\nonumber\\
u_{t}  &  =\rho u_{t-1}+\eta_{u,t}\text{, }\forall t\in%
\mathbb{N}
\text{, }u_{0}\text{ given, }0<\rho<1\text{, }\varepsilon>1\text{, }%
0<\beta<1\text{,}\nonumber
\end{align}

where $x_{t}$ represents the welfare-relevant output gap, i.e. the deviation
between (log) output and its efficient level, $\pi_{t}$ the rate of inflation
between periods $t-1$ and $t$, $\beta$ the discount factor, and $E_{t}$ the
expectation operator. The utility the central bank obtains if next period's
objectives change is denoted $V^{jk}$. Inflation expectations are an average
between two terms in the new-Keynesian Phillips curve. The first term, with
weight $q$ is the inflation that would prevail under the current regime upon
which there is commitment. The second term with weight $1-q$ is the inflation
that would be implemented under the alternative regime, which is taken as
given by the current central bank. The key change with respect to the model
without the possibility of a change of monetary policy is that the narrow
range of values for the discount factor around $0.99$ for quarterly data
($4\%$ discount rate) is much wider for the "credibility weighted discount
factor" of the policy maker: $\beta q\in\left]  0,0.99\right]  $, with limit
numerical value $q=10^{-7}>0$ in this paper.

The slope of the new-Keynesian Phillips is a decreasing function of the
household's elasticity of substitution between each differentiated
intermediate goods ($\varepsilon>1$):%

\begin{align}
\underset{\varepsilon\rightarrow+\infty}{\lim}\kappa &  =0<\kappa=\left(
\sigma+\frac{\varphi+\alpha_{L}}{1-\alpha_{L}}\right)  \frac{\left(
1-\theta\right)  \left(  1-\beta\theta\right)  }{\theta}\frac{\left(
1-\alpha_{L}\right)  }{\left(  1-\alpha_{L}+\alpha_{L}\varepsilon\right)
}<\kappa_{\max}=\underset{\varepsilon\rightarrow1^{+}}{\lim}\kappa\\
\text{with }\varepsilon &  >1\text{, }0<\beta,\alpha_{L},\theta<1\text{,
}\sigma>0\text{, }\varphi>0\text{.}\nonumber\\
\kappa_{\max}  &  =\underset{\varepsilon\rightarrow1^{+}}{\lim}\kappa=\left(
\sigma+\frac{\varphi+\alpha_{L}}{1-\alpha_{L}}\right)  \frac{\left(
1-\theta\right)  \left(  1-\beta\theta\right)  }{\theta}\left(  1-\alpha
_{L}\right)  .\nonumber
\end{align}

Gali's ([2015], chapter 3) calibration of structural parameters is as follows:
The representative household's discount factor $\beta=0.99$ for a logarithmic
utility of consumption $\sigma=1$ and a unitary Frisch elasticity of labor
supply $\varphi=1$. The household's elasticity of substitution between each
differentiated intermediate goods $\varepsilon=6$. The production function is
$Y=A_{t}L^{1-\alpha_{L}}$ where $Y$ is output, $L$ is labor, $A_{t}$
represents the level of technology. The measure of decreasing returns to scale
of labor is $0<\alpha_{L}=1/3<1$. The proportion of firms who do not reset
their price each period $0<\theta=2/3<1$ which corresponds to an average price
duration of three quarters. For these parameters, the maximal value of the
slope of the new-Keynesian Phillips curve when varying the elasticity of
substitution between intermediate goods, $\kappa_{\max}=0.34$ is obtained when
the elasticity of substitution tends to one. The auto-correlation of the
cost-push shock is $\rho=0.8$.

If the policy maker is maximizing welfare (Gali [2015]), the cost of changing
the policy instrument $x_{t}$ is scaled by $\frac{\kappa}{\varepsilon}$ which
is a decreasing function of the household's elasticity of substitution between
each differentiated intermediate goods ($\varepsilon>$1]:%

\[
0<\frac{\kappa}{\varepsilon}=\left(  \sigma+\frac{\varphi+\alpha_{L}}%
{1-\alpha_{L}}\right)  \frac{\left(  1-\theta\right)  \left(  1-\beta
\theta\right)  }{\theta}\frac{\left(  1-\alpha_{L}\right)  }{\left(
1-\alpha_{L}+\alpha_{L}\varepsilon\right)  }\frac{1}{\varepsilon}%
<\kappa\left(  \varepsilon\right)  <\kappa_{max}.
\]

With Gali's [2015] parameters, the relative weight of the variance of the
policy instrument (output gap) is a very low proportion ($\frac{\kappa
}{\varepsilon}=2.125\%$) of the weight on the variance of the policy target (inflation).

Differentiating the Lagrangian with respect to the policy instrument (output
gap $x_{t}$) and to the policy target (inflation $\pi_{t}$) yields the first
order conditions for $t=1,2,...$:%

\[
\left\{
\begin{array}
[c]{c}%
\frac{\partial%
\mathcal{L}%
}{\partial\pi_{t}}=0:\pi_{t}+\gamma_{t+1}-\gamma_{t}=0\\
\frac{\partial%
\mathcal{L}%
}{\partial x_{t}}=0:\frac{\kappa}{\varepsilon}x_{t}-\kappa\gamma_{t+1}=0
\end{array}
\right.  \Rightarrow\left\{
\begin{array}
[c]{c}%
x_{t}=x_{t-1}-\varepsilon\pi_{t}\\
x_{t}=\varepsilon\gamma_{t+1}=\varepsilon(\gamma_{t}-\pi_{t})
\end{array}
\right.
\]
The central bank's Euler equation ($\frac{\partial%
\mathcal{L}%
}{\partial\pi_{t}}=0$) links recursively the future or current value of
central bank's policy instrument $x_{t}$ to its current or past value
$x_{t-1}$, because of the central bank's relative cost of changing her policy
instrument is strictly positive $\alpha_{x}=\frac{\kappa}{\varepsilon}>0$.
This non-stationary Euler equation adds an unstable eigenvalue in the central
bank's Hamiltonian system including three laws of motion of one
forward-looking variable (inflation $\pi_{t}$) and of two predetermined
variables $\left(  u_{t},x_{t}\right)  $ or $\left(  u_{t},\gamma_{t}\right)
$.

The transversality condition $\gamma_{0}=0$ minimizes the loss function with
respect to inflation at the initial date:%

\[
\gamma_{0}=0\Rightarrow x_{-1}=-\varepsilon\gamma_{0}=0\text{ so that }\pi
_{0}=-\frac{1}{\varepsilon}x_{0}\text{ or }x_{0}=-\varepsilon\pi_{0}\text{ .}%
\]

It predetermines the policy instrument which allows to anchor the
forward-looking policy target (inflation). The inflation Euler equation
corresponding to period $0$ is not an effective constraint for the central
bank choosing its optimal plan in period $0$. The former commitment to the
value of the policy instrument of the previous period $x_{-1}$ is not an
effective constraint. The policy instrument is predetermined at the value zero
$x_{-1}=0$ at the period preceding the commitment. Combining the two first
order conditions to eliminate the Lagrange multipliers yields the optimal
initial anchor of forward inflation $\pi_{0}$ on the predetermined policy
instrument $x_{0}$.

Chatelain and\ Ralf's [2019] algorithm for Ramsey optimal policy with forcing
variables seeks a stable subspace of dimension two in a system of three
equations including the first order Euler condition on the policy instrument
(or on the Lagrange multiplier on inflation). The representation of the
optimal policy rule depends on current private sectors variables:%

\begin{equation}
x_{t}=F_{\pi}\pi_{t}+F_{u}u_{t}.
\end{equation}

This representation of the optimal policy rule is simpler than other
observationally equivalent alternatives proposed for example by Gali [2015] or
by Schaumburg and Tambalotti [2007], where the policy instrument depends on
its lagged value instead of inflation. Chatelain and Ralf [2019] provide the
details of the solution. The characteristic polynomial of the Hamiltonian
system is:%

\[
\lambda^{2}-\left(  1+\frac{1}{\beta q}+\frac{\kappa\varepsilon}{\beta
q}\right)  \lambda+\frac{1}{\beta q}=0.
\]

Its solution inside the unit circle is denoted "inflation eigenvalue"
$\lambda\left(  q,\varepsilon\right)  $:\qquad%

\[
\lambda\left(  q,\varepsilon\right)  =\frac{1}{2}\left(  \left(  1+\frac
{1}{\beta q}+\frac{\varepsilon\kappa}{\beta q}\right)  -\sqrt{\left(
1+\frac{1}{\beta q}+\frac{\varepsilon\kappa}{\beta q}\right)  ^{2}-\frac
{4}{\beta q}}\right)  =\frac{1}{\beta q}-\frac{\kappa}{\beta q}F_{\pi}.
\]

The optimal policy rule parameters are:%

\[
F_{\pi}=\left(  \frac{\lambda\left(  q,\varepsilon\right)  }{1-\lambda\left(
q,\varepsilon\right)  }\right)  \varepsilon>0\text{ and }F_{u}=\frac
{-1}{1-\beta q\rho\lambda\left(  q,\varepsilon\right)  }F_{\pi}.
\]

The initial value of the policy instrument $x_{0}$ is anchored on the initial
value of the cost-push shock $u_{0}$ because of the feedback policy rule. This
implies an optimal initial anchor of inflation $\pi_{0}$ on the initial value
of the cost-push shock $u_{0}$:%

\[
\left.
\begin{array}
[c]{c}%
x_{0}=F_{\pi}\pi_{0}+F_{u}u_{0}\\
\pi_{0}=-\frac{1}{\varepsilon}x_{0}%
\end{array}
\right\}  \Rightarrow\left\{
\begin{array}
[c]{c}%
x_{0}=\varepsilon\frac{-F_{u}}{F_{\pi}+\varepsilon}u_{0}=-\varepsilon
\frac{\lambda\left(  q,\varepsilon\right)  }{1-\beta q\rho\lambda\left(
q,\varepsilon\right)  }u_{0}\\
\pi_{0}=-\frac{F_{u}}{F_{\pi}+\varepsilon}u_{0}=\frac{\lambda\left(
q,\varepsilon\right)  }{1-\beta q\rho\lambda\left(  q,\varepsilon\right)
}u_{0}.
\end{array}
\right.  \text{ }%
\]

The dynamical system in the space of target variables is locally stable with
two eigenvalues inside the unit circle: $0<\rho<1$ and $0<\lambda\left(
q,\varepsilon\right)  <1$:%
\[
\left(
\begin{array}
[c]{c}%
E_{t-1}\pi_{t}\\
u_{t}%
\end{array}
\right)  =\left(
\begin{array}
[c]{cc}%
\lambda\left(  q,\varepsilon\right)  & -\frac{1}{\beta q}-\frac{\kappa}{\beta
q}F_{u}\\
0 & \rho
\end{array}
\right)  ^{t}\left(
\begin{array}
[c]{c}%
\frac{\lambda\left(  q,\varepsilon\right)  }{1-\beta q\rho\lambda\left(
q,\varepsilon\right)  }\\
1
\end{array}
\right)  u_{0}.
\]

The positive correlation ($\kappa>0$) between current inflation and current
output gap implies a negative correlation ($-\frac{\kappa}{\beta q}$) between
future inflation and current output gap. The positive sign of the policy rule
parameter ($F_{\pi}\left(  q,\varepsilon\right)  >0$) satisfies this necessary
condition in order to lean against inflation spirals:%

\[
0<\lambda\left(  q,\varepsilon\right)  =\frac{1}{\beta q}-\frac{\kappa}{\beta
q}F_{\pi}\left(  q,\varepsilon\right)  <1<\frac{1}{\beta q}\Rightarrow
-\frac{\kappa}{\beta q}F_{\pi}\left(  q,\varepsilon\right)  <0.
\]

The sign of the correlation of expected inflation with the policy instrument
determines the sign of the response of the policy instrument to the policy
target in the optimal policy rule. By contrast, in the accelerationist
Phillips curve, Clarida, Gali and Gertler [2001] mention that there is a
positive sign of the correlation between expected inflation and current output
gap ($-\frac{\kappa}{\beta q}$). If $\kappa<0$, because negative feedback
requires $-\frac{\kappa}{\beta q}F_{\pi}<0$, the sign of the optimal policy
rule is negative ($F_{\pi}<0$).

\section{DISCRETION}

There is a long history of different definitions of "discretion" for
stabilization policy (Chatelain and Ralf [2020a]). Clarida, Gali and Gertler
[1999] and Gali ([2015], chapter 5) define "discretion" (or discretion
equilibrium) as the case where policy makers re-optimize with certainty each period.

\begin{proposition}
When policy makers re-optimize each period ($q=0$), they do static
optimization each period even if the private sector's transmission mechanism
is dynamic.
\end{proposition}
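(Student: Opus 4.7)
The plan is to substitute $q=0$ into problem~(1) and trace the two structural simplifications that ensue. First, the credibility-weighted discount factor $\beta q$ vanishes, so $(\beta q)^{t}=0$ for every $t\geq 1$ and the infinite sum in the objective collapses to the single period-zero term $-\tfrac{1}{2}\bigl(\pi_{0}^{2}+(\kappa/\varepsilon)x_{0}^{2}\bigr)+\beta V^{jk}(u_{0})$. Second, because a replacement takes office with probability one at $t=1$, both the continuation value $V^{jk}(u_{0})$ and the expected next-period inflation $E_{0}\pi_{1}^{k}$ appearing in the current Phillips curve are fixed by the successor's rule and cannot be influenced by the incumbent; they enter her period-zero optimization as exogenous constants.

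Next I would form the one-period Lagrangian
\[
\mathcal{L}_{0} = -\tfrac{1}{2}\bigl(\pi_{0}^{2} + \tfrac{\kappa}{\varepsilon}x_{0}^{2}\bigr) + \gamma_{0}\bigl(\kappa x_{0} + \beta E_{0}\pi_{1}^{k} + u_{0} - \pi_{0}\bigr) + \text{const},
\]
and differentiate with respect to $\pi_{0}$ and $x_{0}$. Applying the same calculation as in Section~2 but truncated to a single period, the two first order conditions algebraically eliminate $\gamma_{0}$ and yield the purely static targeting rule $x_{0} = -\varepsilon\pi_{0}$. The decisive point is that the forward multiplier $\gamma_{t+1}$ that generated the recursive Euler equation $\pi_{t}+\gamma_{t+1}-\gamma_{t}=0$ and its associated dynamic link $x_{t}=x_{t-1}-\varepsilon\pi_{t}$ in the quasi-commitment case is absent here, because the constraints for $t\geq 1$ no longer belong to the current policy maker's optimization. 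Since every successor $k,\ell,\ldots$ faces a structurally identical one-period problem upon taking office, time homogeneity delivers $x_{t}=-\varepsilon\pi_{t}$ for all $t$, with neither a lagged state nor a forward-looking shadow price entering; this is precisely static optimization.

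The subtle step to justify carefully is the treatment of $E_{t}\pi_{t+1}$ as exogenous even though the Phillips curve is intrinsically dynamic through that term. I would invoke the Markov-perfect convention standard in the discretion literature (Clarida, Gali and Gertler [1999]): the successor's own static rule $x_{t+1}=-\varepsilon\pi_{t+1}$ together with her period-$(t+1)$ Phillips curve makes $E_{t}\pi_{t+1}$ a function of $u_{t}$ alone, independent of the current choices $(\pi_{t},x_{t})$. This closes the fixed-point argument and confirms that each period's problem is genuinely static, as claimed. The main obstacle is precisely this conceptual point rather than any computation: one must show that the apparent dynamic coupling through expectations is, under $q=0$, severed endogenously by the equilibrium behavior of future policy makers.
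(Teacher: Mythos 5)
Your proof is correct and follows essentially the same route as the paper: setting $q=0$ collapses the discounted objective to the single period-zero term (since $(\beta q)^{t}=0$ for $t\geq 1$) and renders the continuation value and the expectation term $E_{0}\pi_{1}^{k}$ exogenous constants, so the Phillips curve has a given intercept and the problem is genuinely static. Your additional derivation of $x_{0}=-\varepsilon\pi_{0}$ and the Markov-perfect closure of the fixed point anticipate material the paper defers to Proposition~2, but they do not alter the core argument.
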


\begin{proof}
If $q=0$, the policy maker's discounted loss function boils down to a
\emph{static utility} ($\left(  \beta.0\right)  ^{0}=1$, $\left(
\beta.0\right)  ^{t}=0,t=1,2,...$). The policy maker only values the current
period, as he knows he will be replaced next period, whatever the duration of
the next period:%
\[
V^{jj}\left(  u_{0}\right)  =-E_{0}%
{\displaystyle\sum\limits_{t=0}^{t=+\infty}}
\left(  \beta q\right)  ^{t}\left[  \frac{1}{2}\left(  \pi_{t}^{2}%
+\frac{\kappa}{\varepsilon}x_{t}^{2}\right)  +\beta\left(  1-q\right)
V^{jk}\left(  u_{t}\right)  \right]  =-\frac{1}{2}\left(  \pi_{0}^{2}%
+\frac{\kappa}{\varepsilon}x_{0}^{2}\right)  .
\]
The policy maker's infinite horizon transversality condition is always
satisfied as he does not survive the current period and as his discount factor
is zero after his single period of life. Because the policy maker does not
value the future, he does not take into account private sector expectations in
the new-Keynesian Phillips curve. He only considers a static Phillips curve
with an exogenous intercept:%
\[
\pi_{0}=\kappa x_{0}+\beta0E_{0}\pi_{1}+\beta\left(  1-0\right)  E_{0}\pi
_{1}^{k}+u_{0}\text{ }=\kappa x_{0}+u_{0}+\beta E_{0}\pi_{1}^{k=1}\text{,
}u_{0}+\beta E_{0}\pi_{1}^{k+1=1}\text{ given.}%
\]
The superscript for the policy maker index $k$ is now identical to the period
index, as each period corresponds to a new policy maker.
\end{proof}

As expected inflation is not taken into account by successive policy makers,
one order (or one dimension) of the dynamics of inflation is removed in
discretion. By contrast, for a non-zero probability of not reneging
commitment, ($q>0$), the policy maker takes into account private sector
expectations. This causes the bifurcation of the dynamic system between
discretion ($q=0$) versus quasi-commitment ($q\in\left]  0,1\right]  $), see
section 3.

\begin{proposition}
The policy makers' static optimizations of an otherwise dynamic transmission
mechanism implies locally unstable dynamics of the dynamic transmission mechanism.
\end{proposition}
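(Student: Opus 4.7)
The strategy is to derive the discretion feedback rule from the static problem in Proposition~1, substitute it into the New-Keynesian Phillips curve, and read off the autoregressive root of the resulting inflation dynamics.

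First I would derive the period-$t$ first-order condition. Proposition~1 reduces the policy maker's problem to the static minimization of $\tfrac{1}{2}(\pi_t^{2} + (\kappa/\varepsilon)x_t^{2})$ subject to the Phillips curve with $E_t\pi_{t+1}^{k}$ and $u_t$ taken as given. Setting the derivative with respect to $x_t$ to zero on the substituted objective yields $\kappa\pi_t + (\kappa/\varepsilon)x_t = 0$, so the closed-loop rule is $x_t = -\varepsilon\,\pi_t$. The sign is \emph{opposite} to the quasi-commitment rule $F_\pi = \lambda(q,\varepsilon)\varepsilon/(1-\lambda(q,\varepsilon))>0$ derived in Section~2; this sign flip is the engine driving the bifurcation.

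Next I would substitute $x_t=-\varepsilon\pi_t$ into the NKPC. In the symmetric discretion equilibrium where every successive policy maker plays the same rule, $E_t\pi_{t+1}^{k} = E_t\pi_{t+1}$, and the Phillips curve becomes
\[
(1+\kappa\varepsilon)\,\pi_t = \beta\,E_t\pi_{t+1} + u_t,
\qquad\text{i.e.}\qquad
E_t\pi_{t+1} = \frac{1+\kappa\varepsilon}{\beta}\,\pi_t - \frac{1}{\beta}\,u_t.
\]
The autoregressive root of the inflation law of motion is $\mu := (1+\kappa\varepsilon)/\beta$. Since $0<\beta<1$ and $\kappa\varepsilon>0$, one has $\mu > 1/\beta > 1$. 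Combined with the exogenous shock dynamics $u_{t+1}=\rho u_t+\eta_{u,t+1}$ with $\rho\in(0,1)$, the closed-loop system in $(\pi_t,u_t)$ has eigenvalues $\mu>1$ and $\rho<1$, so it is locally unstable around the zero inflation steady state.

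The algebra is routine; the conceptual point is what matters. In Section~2 the quasi-commitment eigenvalue $\lambda(q,\varepsilon)$ lay inside the unit circle, and the predetermined policy instrument $x_{t-1}$ anchored the forward-looking inflation along a stable saddle path. Under discretion there is no predetermined state to anchor and the feedback coefficient has the wrong sign, so the unstable root $\mu>1$ rules the dynamics. The only step I would write with care is contrasting $\mu>1$ with $\lambda(q,\varepsilon)<1$ to make the saddle-node bifurcation at $q=0$ explicit, noting that although the rational-expectations locus $\pi_t = u_t/(1+\kappa\varepsilon-\beta\rho)$ is bounded, any deviation from it diverges at rate $\mu$, which is the sense in which the transmission mechanism is locally unstable.
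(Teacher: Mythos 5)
Your proposal is correct and follows essentially the same route as the paper: derive the static first-order condition $x_t=-\varepsilon\pi_t$ (you do it by differentiating the substituted objective, the paper by the tangency of the iso-utility ellipse with the static Phillips curve — these are equivalent), substitute into the New-Keynesian Phillips curve, and identify the unstable root $\lambda(0,\varepsilon)=(1+\kappa\varepsilon)/\beta>1/\beta>1$ alongside the stable shock root $\rho$. No gap.
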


\begin{proof}
On an iso-utility curve, the derivative of the loss function is equal to zero:%
\[
dL=\pi_{0}d\pi_{0}+\frac{\kappa}{\varepsilon}x_{0}dx_{0}=0\Rightarrow
\frac{\partial\pi_{0}}{\partial x_{0}}=-\frac{\kappa}{\varepsilon}\frac{x_{0}%
}{\pi_{0}}.
\]
The policy maker's first order condition is such that the iso-utility ellipse
is tangent to the slope of the static Phillips curve with a given intercept:%
\[
\frac{\partial\pi_{0}}{\partial x_{0}}=\kappa=-\frac{\kappa}{\varepsilon}%
\frac{x_{0}}{\pi_{0}}\Rightarrow x_{0}=-\varepsilon\pi_{0}.
\]
This static optimization implies a proportional policy rule with the exact
negative correlation of the policy instrument (output gap) with the policy
target (inflation). The proportional parameter is the opposite of the
household's elasticity of substitution between goods. Any increase of current
inflation is instantaneously related to a decrease of current output. This
static optimal program is repeatedly solved each period by each new policy
maker of period $t$:%
\begin{equation}
x_{t}=-\varepsilon\pi_{t}\text{ for }t=0,1,2,...\text{ with }\varepsilon
>1\text{.}%
\end{equation}
Substituting this policy rule in the new-Keynesian Phillips curve, one has
this recursive unstable dynamics for inflation, denoted the discretionary
inflation eigenvalue $\lambda\left(  0,\varepsilon\right)  $ for $q=0$:%
\[
E_{t}\pi_{t+1}^{t+1}=\left(  \frac{1+\kappa\varepsilon}{\beta}\right)  \pi
_{t}-\frac{1}{\beta}u_{t}\text{ with }1<\frac{1}{\beta}<\lambda\left(
0,\varepsilon\right)  =\frac{1+\kappa\varepsilon}{\beta}\text{ .}%
\]
The policy maker lives only the current period. He sets a zero weight on
future inflation. For the policy maker, it does not matter what happens as
long as it is not in his instantaneous lifetime.\emph{ The policy maker does
not care in his loss function that his static optimal policy leads to
sub-optimal unstable dynamics in the future.} The economy dynamic system is
locally unstable with the inflation dynamics eigenvalue outside the unit
circle ($\lambda\left(  0,\varepsilon\right)  >\frac{1+\kappa}{\beta}>1$) and
the cost-push shock autoregressive root inside the unit circle ($0<\rho<1$):%
\begin{equation}
\left(
\begin{array}
[c]{cc}%
\lambda\left(  0,\varepsilon\right)  =\frac{1}{\beta}+\frac{\kappa}{\beta
}\varepsilon & -\frac{1}{\beta}\\
0 & \rho
\end{array}
\right)  \left(
\begin{array}
[c]{c}%
\pi_{t}\\
u_{t}%
\end{array}
\right)  =\left(
\begin{array}
[c]{c}%
E_{t}\pi_{t+1}^{t+1}\\
u_{t+1}%
\end{array}
\right)  .
\end{equation}

\end{proof}

The optimal policy for a static Phillips curve assuming a positive correlation
between current inflation and current output gap ($\kappa>0$) but excluding
inflation expectations, leads to an optimal solution where the output gap
responds negatively to inflation ($F_{\pi}=-\varepsilon<0$). This example
shows that the optimal solution of static optimization can be different from
the solution of dynamic optimization (Phillips [1954]).

The policy makers' sub-optimal static optimizations of a dynamic transmission
mechanism implies locally unstable dynamics in the space of policy targets.
Starting with

Hypothesis $H_{1}$: Policy makers re-optimize every period, $q=0$,

building epicycles on epicycles, four additional restrictions on private
sector's behavior have to be imposed in order to restrict inflation dynamics
to the stable subspace of dimension one within the unstable space of dimension
two. These four assumptions are not explicitly spelled out in Clarida, Gali
and Gertler [1999] and in Gali [2015]:

Hypothesis $H_{2}$: The private sector expectation of inflation $E_{t}\pi
_{1}^{k+1=1}$ does not depend on current and past values of inflation.

Hypothesis $H_{3}$: The private sector does not select paths with inflation or
deflation spirals.

Hypothesis $H_{4}$: The policy instrument $x_{t}$ is a non-predetermined
variable without a given initial condition $x_{0}$.

Hypothesis $H_{5}$: Policy makers and the private sector measure with infinite
precision the initial and the future values of variables and the structural
parameters of the transmission mechanism.

\begin{proposition}
Under assumptions $H_{i}$ ($i=1,...,5$), Blanchard and Kahn's [1980]
determinacy condition forces a unique solution with an exact correlation of
inflation and output gap with the cost-push shock.
\end{proposition}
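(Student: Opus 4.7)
The plan is to apply Blanchard and Kahn's [1980] counting criterion directly to the reduced two-variable system already derived in Proposition~2. First I would rewrite the state dynamics as
\[
\begin{pmatrix} E_t\pi_{t+1} \\ u_{t+1} \end{pmatrix} = \begin{pmatrix} \lambda(0,\varepsilon) & -1/\beta \\ 0 & \rho \end{pmatrix}\begin{pmatrix} \pi_t \\ u_t \end{pmatrix},
\]
whose eigenvalues are $\lambda(0,\varepsilon) = (1+\kappa\varepsilon)/\beta > 1$ and $\rho \in (0,1)$. Under $H_{4}$ the policy instrument $x_t$, and hence $\pi_t$ through the static rule $x_t=-\varepsilon\pi_t$, is treated as non-predetermined, while $u_t$ is predetermined by its law of motion. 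The system therefore contains exactly one unstable eigenvalue and one non-predetermined variable, so the Blanchard-Kahn count is balanced and determinacy follows.

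The second step is to construct the unique non-explosive trajectory explicitly by the method of undetermined coefficients. Guessing $\pi_t = a\,u_t$, which is the only form compatible with $H_{2}$ (private expectations cannot load on current or lagged inflation), and substituting $E_t\pi_{t+1}=a\rho u_t$ together with $x_t=-\varepsilon\pi_t$ into the new-Keynesian Phillips curve gives
\[
a(1+\kappa\varepsilon-\beta\rho)=1, \qquad a=\frac{1}{1+\kappa\varepsilon-\beta\rho},
\]
so that $\pi_t$ and $x_t=-\varepsilon a\,u_t$ are exact linear functions of the cost-push shock. The same coefficient is recovered geometrically by projecting the initial state onto the stable eigenvector associated with $\rho$, which is the content of the Blanchard-Kahn selection.

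The final step is to verify that each of the five hypotheses plays an indispensable role. $H_{1}$ produces the static reaction function and therefore the unstable eigenvalue itself. $H_{4}$ combined with $H_{3}$ is what lets Blanchard-Kahn discard the one-parameter family of explosive trajectories by allowing $\pi_0$ to jump onto the stable manifold. $H_{2}$ prevents private-sector expectations from adding further state variables that would spoil the one-to-one eigenvalue count. Finally $H_{5}$ is essential because the admissible solution lies on a one-dimensional manifold embedded in a two-dimensional state space, so the slightest parameter or measurement error projects the economy onto a direction in which the explosive root $\lambda(0,\varepsilon)>1$ dominates.

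The main obstacle will be conceptual rather than algebraic: articulating precisely why $H_{2}$ and $H_{5}$ must be invoked as independent and non-trivial assumptions, and stressing that the resulting exact correlation between $(\pi_t,x_t)$ and $u_t$ is a knife-edge property with no robustness, in sharp contrast with the quasi-commitment solution of Section~2. The algebra is a routine Blanchard-Kahn projection once this conceptual framing is in place.
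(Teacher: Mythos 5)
Your proposal is correct and follows essentially the same route as the paper: the paper's proof is exactly the projection onto the eigenvector of the stable root $\rho$, yielding $\pi_t = u_t/(1-\beta\rho+\kappa\varepsilon)$ and then $x_t=-\varepsilon\pi_t$, and your undetermined-coefficients calculation is the same computation in different clothing (as you yourself note). The extra material on the eigenvalue count and the role of each hypothesis $H_1$--$H_5$ is consistent with the surrounding discussion in the paper but not part of its formal proof.
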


\begin{proof}
Blanchard and Kahn's [1980] solution is given by the unique slope of the
eigenvectors of the given stable eigenvalue $0<\rho<1$ of the predetermined
cost-push shock:%
\[
\left(
\begin{array}
[c]{cc}%
\frac{1}{\beta}+\frac{\kappa}{\beta}\varepsilon & -\frac{1}{\beta}\\
0 & \rho
\end{array}
\right)  \left(
\begin{array}
[c]{c}%
\pi_{t}\\
u_{t}%
\end{array}
\right)  =\rho\left(
\begin{array}
[c]{c}%
\pi_{t}\\
u_{t}%
\end{array}
\right)  \Rightarrow\left(  \frac{1}{\beta}+\frac{\kappa}{\beta}%
\varepsilon-\rho\right)  \pi_{t}=\frac{1}{\beta}u_{t}.
\]
There is an exact positive correlation between inflation and the cost-push
shock:%
\begin{equation}
\pi_{t}=\left(  \frac{1}{1-\beta\rho+\kappa\varepsilon}\right)  u_{t}.
\end{equation}
Combining this equation with the policy rule leads to an exact negative
correlation between output gap $x_{t}$ and the cost-push shock $u_{t}$:%
\begin{equation}
x_{t}=-\varepsilon\left(  \frac{1}{1-\beta\rho+\kappa\varepsilon}\right)
u_{t}\text{.}%
\end{equation}

\end{proof}

The dynamics of inflation is locally unstable in the space ($\pi_{t},u_{t}$)
of dimension two. However, inflation dynamics is constrained to vary in a
stable subspace of dimension one, which is lower than the dimension of the
policy target and forcing variables dynamics.\ For example, if $u_{t}$ is
exogenous imported inflation, inflation $\pi_{t}$ would be exactly correlated
with imported inflation, without any relation to other drivers of inflation
within the country.

For practitioners of monetary policy, the assumption $H_{5}$ of perfect
measurement stretches credulity to its limit, because of the measurement
issues for inflation, for the output gap, for the cost-push shock, for the
slope $\kappa\left(  \beta,\varepsilon,\alpha_{L},\theta,\sigma,\varphi
\right)  $ of the new-Keynesian Phillips curve (Mavroeidis \textit{et al.}
[2015]) and for the correlation of expected inflation with current inflation
($1/\beta$). Because of these imperfect measurements, in the real world, the
probability to select unstable paths with inflation and deflation spirals is
equal to one with the discretion policy rule.

\section{BIFURCATION}

\subsection{Inflation Eigenvalue}

Going from monetary policy with quasi commitment to discretionary policy
results in a fundamental change of the properties of the dynamical system.

\begin{proposition}
For any value of the elasticity of substitution between goods $\varepsilon>1$:

(i) There is a saddle-node bifurcation on the inflation eigenvalue when
shifting from quasi-commitment ($q\in\left]  0,1\right]  $ with stable
eigenvalue $0<\lambda\left(  q,\varepsilon\right)  <1$) to discretion ($q=0$,
with unstable eigenvalue $\lambda\left(  0,\varepsilon\right)  >1$):\emph{ }%
\[
0<\lambda_{\min}<\lambda\left(  q,\varepsilon\right)  <\frac{1}{1+\kappa
_{\max}}<1<\frac{1+\kappa_{\max}}{\beta}<\lambda\left(  0,\varepsilon\right)
<\frac{1}{\beta}+\frac{1}{\beta}\frac{\kappa_{\max}}{\alpha_{L}}\text{.}%
\]

(ii) The optimal inflation persistence decreases with the policy maker's
credibility measured by the probability of not reneging commitment
$q\in\left]  0,1\right]  $:%
\[
\frac{\partial\lambda\left(  q,\varepsilon\right)  }{\partial q}<0\text{.}%
\]

(iii) The inflation eigenvalue decreases (respectively increases) with the
elasticity of substitution $\varepsilon$ for quasi-commitment ($q\in\left]
0,1\right]  $) (respectively for discretion ($q=0$)):%
\[
\frac{\partial\lambda\left(  q,\varepsilon\right)  }{\partial\varepsilon
}<0<\frac{\partial\lambda\left(  0,\varepsilon\right)  }{\partial\varepsilon
}=\frac{\kappa}{\beta}%
\]

\end{proposition}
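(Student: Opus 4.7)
The plan is to work directly with the characteristic polynomial rewritten in cleared form
\[
P_{q,\varepsilon}(\lambda):=\beta q\,\lambda^{2}-\bigl(1+\beta q+\kappa\varepsilon\bigr)\lambda+1=0,
\]
with $\lambda(q,\varepsilon)$ defined as its smaller root for $q\in]0,1]$ and $\lambda(0,\varepsilon)=(1+\kappa\varepsilon)/\beta$ as the degenerate-limit formula already derived for discretion. The three parts then reduce to (a) locating $\lambda(q,\varepsilon)$ between two explicit values, (b) signing two implicit derivatives, and (c) bounding $\kappa\varepsilon$ as a function of $\varepsilon>1$.

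For part (i), I would first note that Vieta gives product of roots $1/(\beta q)>1$ and sum $1+(1+\kappa\varepsilon)/(\beta q)>0$, so both roots are real and positive, with one above and one below $1$; in particular $\lambda(q,\varepsilon)>0$. The crux is to establish the upper bound $\lambda(q,\varepsilon)<1/(1+\kappa\varepsilon)$: evaluating $P_{q,\varepsilon}$ at the candidate value gives
\[
P_{q,\varepsilon}\!\left(\tfrac{1}{1+\kappa\varepsilon}\right)=\tfrac{1}{(1+\kappa\varepsilon)^{2}}-\tfrac{1}{1+\kappa\varepsilon}=-\tfrac{\kappa\varepsilon}{(1+\kappa\varepsilon)^{2}}<0,
\]
so $1/(1+\kappa\varepsilon)$ lies strictly between the two roots, which places $\lambda(q,\varepsilon)$ below it. To convert this to the stated bound I need $\kappa\varepsilon>\kappa_{\max}$ for $\varepsilon>1$, which follows from the explicit formula $\kappa\varepsilon=C(1-\alpha_{L})\varepsilon/(1-\alpha_{L}+\alpha_{L}\varepsilon)$ (with $C$ the $\varepsilon$-independent prefactor), by checking that its derivative $C(1-\alpha_{L})^{2}/(1-\alpha_{L}+\alpha_{L}\varepsilon)^{2}>0$. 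For discretion the bounds are a one-liner once I know $\kappa_{\max}<\kappa\varepsilon<\kappa_{\max}/\alpha_{L}$; the upper limit is the $\varepsilon\to\infty$ value of the same expression, and the chain $1<(1+\kappa_{\max})/\beta<\lambda(0,\varepsilon)<(1+\kappa_{\max}/\alpha_{L})/\beta$ then drops out. The existence of $\lambda_{\min}>0$ as a lower bound for $\lambda(q,\varepsilon)$ on $q\in]0,1]$ follows from continuity together with part (ii), giving $\lambda_{\min}=\lambda(1,\varepsilon)$.

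For parts (ii) and (iii), I would differentiate $P_{q,\varepsilon}(\lambda)=0$ implicitly. Differentiation in $q$ yields
\[
\lambda_{q}=\frac{\beta\lambda(1-\lambda)}{2\beta q\lambda-\beta q-1-\kappa\varepsilon}.
\]
Since $\lambda\in(0,1)$, the numerator is positive; and the denominator satisfies
$2\beta q\lambda-\beta q-1-\kappa\varepsilon<\beta q-1-\kappa\varepsilon<0$, giving $\lambda_{q}<0$. Differentiation in $\varepsilon$ (treating $\kappa$ as a fixed parameter, which is the convention used in the stated identity $\partial\lambda(0,\varepsilon)/\partial\varepsilon=\kappa/\beta$) gives
\[
\lambda_{\varepsilon}=\frac{\kappa\lambda}{2\beta q\lambda-\beta q-1-\kappa\varepsilon},
\]
whose numerator is positive and denominator again negative, hence $\lambda_{\varepsilon}<0$ on $q\in]0,1]$. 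For $q=0$ the formula $\lambda(0,\varepsilon)=(1+\kappa\varepsilon)/\beta$ is linear in $\varepsilon$ at fixed $\kappa$, immediately giving the positive derivative $\kappa/\beta$. The sign flip between the two regimes is exactly the saddle-node bifurcation announced in (i).

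The only substantive obstacle is a bookkeeping one: making sure the inequalities on $\kappa\varepsilon$ used to sandwich both $\lambda(q,\varepsilon)$ and $\lambda(0,\varepsilon)$ in (i) are uniform in $\varepsilon>1$ (and independent of the pricing-friction constants), which requires the explicit monotonicity of the map $\varepsilon\mapsto\kappa(\varepsilon)\varepsilon$ established above together with its endpoint values $\kappa_{\max}$ and $\kappa_{\max}/\alpha_{L}$. The ambiguity between the partial derivative $\partial\lambda/\partial\varepsilon$ at fixed $\kappa$ and the total derivative along $\kappa=\kappa(\varepsilon)$ should be flagged but does not alter the sign of $\lambda_{\varepsilon}$ for $q>0$, since $d\kappa/d\varepsilon<0$ only reinforces the inequality.
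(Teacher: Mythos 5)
Your proof is correct, but it reaches the key bounds by a genuinely different route than the paper. For the crucial upper bound on $\lambda(q,\varepsilon)$, the paper relies on monotonicity in $q$ and $\varepsilon$ and then computes the supremum as a double limit $q\rightarrow0^{+}$, $\varepsilon\rightarrow1^{+}$ via the asymptotic expansion $\lambda\sim\frac{1+\kappa}{2\beta q}\left(1-\sqrt{1-\frac{4\beta q}{(1+\kappa)^{2}}}\right)\sim\frac{1}{1+\kappa}$; you instead evaluate the cleared characteristic polynomial at the candidate value $\frac{1}{1+\kappa\varepsilon}$, find it negative, and conclude by the intermediate-value location of the smaller root. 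Your argument is non-asymptotic, gives the sharper $\varepsilon$-dependent bound $\lambda(q,\varepsilon)<\frac{1}{1+\kappa(\varepsilon)\varepsilon}$ \emph{for every} $q\in\left]0,1\right]$ (not just in the limit), and then the stated bound follows from $\kappa\varepsilon>\kappa_{\max}$, which you establish exactly as the paper does. Likewise for (ii) the paper's appendix substitutes $P=1/(\beta q)$ and invokes $\sqrt{1+x}<1+\frac{1}{2}x$, whereas you implicitly differentiate and sign the denominator $2\beta q\lambda-(1+\beta q+\kappa\varepsilon)<0$ using $\lambda<1$ and $\beta q<1$; this unifies (ii) and (iii) under one computation and is arguably cleaner. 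Three small points to tidy: (a) in the evaluation $P_{q,\varepsilon}\left(\frac{1}{1+\kappa\varepsilon}\right)$ you dropped a factor $\beta q$ --- the correct value is $-\beta q\,\kappa\varepsilon/(1+\kappa\varepsilon)^{2}$, which does not change the sign; (b) Vieta alone (positive product and sum) does not give realness of the roots nor that they straddle $1$ --- but your polynomial evaluation at an interior point being negative supplies both, so nothing is actually missing; (c) the paper's $\lambda_{\min}$ is the double limit $q\rightarrow1^{-}$, $\varepsilon\rightarrow+\infty$, so a lower bound uniform in $\varepsilon$ requires combining your part (iii) with the bound $\kappa\varepsilon<\kappa_{\max}/\alpha_{L}$, not just $\lambda(1,\varepsilon)$; you have all the ingredients but should assemble them explicitly. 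Your closing remark that $d\kappa/d\varepsilon<0$ ``reinforces'' the sign is imprecise --- what actually matters is that $\frac{d}{d\varepsilon}\left(\kappa(\varepsilon)\varepsilon\right)>0$, which you prove, so the total and partial derivatives of $\lambda$ in $\varepsilon$ have the same sign.
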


\begin{proof}
For (i), we first seek the limits of $\kappa\left(  \varepsilon\right)
\varepsilon$ which is an increasing function of $\varepsilon\in\left]
1,+\infty\right[  $,
\begin{align*}
\underset{\varepsilon\rightarrow1^{+}}{\lim}\kappa\left(  \varepsilon\right)
\varepsilon &  =\left(  \sigma+\frac{\varphi+\alpha_{L}}{1-\alpha_{L}}\right)
\frac{\left(  1-\theta\right)  \left(  1-\beta\theta\right)  }{\theta}\left(
1-\alpha_{L}\right)  =\kappa_{\max}\\
\underset{\varepsilon\rightarrow+\infty}{\lim}\kappa\left(  \varepsilon
\right)  \varepsilon &  =\left(  \sigma+\frac{\varphi+\alpha_{L}}{1-\alpha
_{L}}\right)  \frac{\left(  1-\theta\right)  \left(  1-\beta\theta\right)
}{\theta}\frac{\left(  1-\alpha_{L}\right)  }{\alpha_{L}}=\frac{\kappa_{\max}%
}{\alpha_{L}}\text{ with }0<\alpha_{L}<1\text{\ }\\
&  \Rightarrow\kappa_{\max}<\kappa\left(  \varepsilon\right)  \varepsilon
<\frac{\kappa_{\max}}{\alpha_{L}}.
\end{align*}
For discretion, the inflation eigenvalue is an increasing affine function of
$\kappa\varepsilon$ with boundaries:%
\[
1<\frac{1}{\beta}<\frac{1}{\beta}+\frac{1}{\beta}\kappa_{\max}<\lambda\left(
0,\varepsilon\right)  =\frac{1}{\beta}+\frac{1}{\beta}\kappa\varepsilon
<\frac{1}{\beta}+\frac{1}{\beta}\frac{\kappa_{\max}}{\alpha_{L}}.
\]
For quasi-commitment, $\lambda\left(  q,\varepsilon\right)  $ is obtained
solving a linear quadratic regulator model so that the inflation eigenvalue is
necessarily within the range $\left]  0,1\right[  $. It is a decreasing
function of $\beta q$, of $q$ (according to (i)) of $\kappa\varepsilon$ and of
$\varepsilon$ with this upper bound:%
\[
\underset{q\rightarrow0^{+}}{\lim}\underset{\varepsilon\rightarrow1^{+}}{\lim
}\lambda\left(  q,\varepsilon\right)  =\frac{1}{2}\left(  1+\frac{1}{\beta
q}+\frac{\kappa\varepsilon}{\beta q}\right)  -\sqrt{\frac{1}{4}\left(
1+\frac{1}{\beta q}+\frac{\kappa\varepsilon}{\beta q}\right)  ^{2}-\frac
{1}{\beta q}}=\frac{1}{1+\kappa_{\max}}<1
\]
which is true because:%
\[
\underset{q\rightarrow0^{+}}{\lim}\frac{1}{2}\left(  1+\frac{1}{\beta q}%
+\frac{1}{\beta q}\kappa_{\max}\right)  -\sqrt{\frac{1}{4}\left(  1+\frac
{1}{\beta q}+\frac{1}{\beta q}\kappa_{\max}\right)  ^{2}-\frac{1}{\beta q}%
}=\frac{1}{1+\kappa_{\max}}<1
\]
and because when $q\rightarrow0^{+}$:%
\[
\lambda\left(  q,\varepsilon\right)  \sim\frac{1+\kappa}{2\beta q}\left(
1-\sqrt{1-\frac{4\beta q}{\left(  1+\kappa\right)  ^{2}}}\right)  \sim
\frac{1+\kappa}{2\beta q}\frac{1}{2}\frac{4\beta q}{\left(  1+\kappa\right)
^{2}}=\frac{1}{1+\kappa}.
\]
Its lower bound is strictly positive:%
\[
\underset{q\rightarrow1^{-}}{\lim}\underset{\varepsilon\rightarrow
+\infty}{\lim}\lambda\left(  q,\varepsilon\right)  =\lambda_{\min}=\frac{1}%
{2}\left(  1+\frac{1}{\beta}+\frac{1}{\beta}\frac{\kappa_{\max}}{\alpha_{L}%
}\right)  -\sqrt{\frac{1}{4}\left(  1+\frac{1}{\beta}+\frac{1}{\beta}%
\frac{\kappa_{\max}}{\alpha_{L}}\right)  ^{2}-\frac{1}{\beta}}>0.
\]
For (ii), see appendix. For claim (iii):
\begin{align*}
\text{sign}\frac{\partial\lambda\left(  q,\varepsilon\right)  }{\partial
\varepsilon}  &  =\text{sign}\frac{\kappa}{\beta q}-\frac{\frac{1}{2}\left(
1+\frac{1}{\beta q}+\frac{\varepsilon\kappa}{\beta q}\right)  \left(
\frac{\kappa}{\beta q}\right)  }{\sqrt{\left(  1+\frac{1}{\beta q}%
+\frac{\varepsilon\kappa}{\beta q}\right)  ^{2}-\frac{4}{\beta q}}}\\
&  =\text{sign}\sqrt{\left(  1+\frac{1}{\beta q}+\frac{\varepsilon\kappa
}{\beta q}\right)  ^{2}-\frac{4}{\beta q}}-\left(  1+\frac{1}{\beta q}%
+\frac{\varepsilon\kappa}{\beta q}\right) \\
&  =\text{sign}\left(  -2\lambda\left(  q,\varepsilon\right)  \right)  <0
\end{align*}

\end{proof}

\subsection{Rule parameter}

The inflation rule parameter $F_{\pi}$ is an affine and decreasing function of
the inflation eigenvalue $\lambda$:%

\[
F_{\pi}\left(  q,\varepsilon\right)  =\frac{1}{\kappa}-\frac{\beta q}{\kappa
}\lambda\left(  q,\varepsilon\right)  .
\]

\begin{proposition}
For any value of the elasticity of substitution between goods $\varepsilon>1$,
the inflation policy rule parameter $F_{\pi}\left(  q,\varepsilon\right)  $ is
positive and increasing with the elasticity of substitution for
quasi-commitment ($q\in\left]  0,1\right]  $), whereas $F_{\pi}\left(
0,\varepsilon\right)  $ is negative, below $-1$ and decreasing with the
elasticity of substitution $F_{\pi}\left(  0,\varepsilon\right)  $ for
discretion:%
\begin{align*}
-\infty &  <F_{\pi}\left(  0,\varepsilon\right)  =-\varepsilon<-1<0\text{
}<F_{\pi}\left(  q,\varepsilon\right)  =\frac{\lambda\left(  q,\varepsilon
\right)  }{1-\lambda\left(  q,\varepsilon\right)  }\varepsilon.\\
\frac{\partial F_{\pi}\left(  0,\varepsilon\right)  }{\partial\varepsilon}  &
<0<\frac{\partial F_{\pi}\left(  q,\varepsilon\right)  }{\partial\varepsilon}%
\end{align*}

\end{proposition}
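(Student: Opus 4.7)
The plan is to treat the two regimes separately, leveraging the explicit expressions for $F_{\pi}$ derived in Sections 2 and 3 together with Propositions 2 and 4. For discretion ($q=0$), Proposition 2 yields the closed-form rule $x_{t}=-\varepsilon\pi_{t}$, hence $F_{\pi}(0,\varepsilon)=-\varepsilon$. Because $\varepsilon>1$, the chain $-\infty<F_{\pi}(0,\varepsilon)=-\varepsilon<-1<0$ is immediate, and $\partial F_{\pi}(0,\varepsilon)/\partial\varepsilon=-1<0$.

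For quasi-commitment ($q\in\left]  0,1\right]$), positivity of
\[
F_{\pi}(q,\varepsilon)=\frac{\lambda(q,\varepsilon)}{1-\lambda(q,\varepsilon)}\varepsilon
\]
follows at once from Proposition 4(i), which places $\lambda(q,\varepsilon)$ strictly inside $\left]  0,1\right[$, so that $\lambda/(1-\lambda)>0$. To sign $\partial F_{\pi}/\partial\varepsilon$, I would instead work with the equivalent representation
\[
F_{\pi}(q,\varepsilon)=\frac{1-\beta q\,\lambda(q,\varepsilon)}{\kappa(\varepsilon)},
\]
obtained by solving the identity $\lambda=1/(\beta q)-(\kappa/\beta q)F_{\pi}$ displayed in Section 2. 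Differentiating with respect to $\varepsilon$ gives
\[
\frac{\partial F_{\pi}}{\partial\varepsilon}=-\frac{\beta q}{\kappa}\frac{\partial\lambda}{\partial\varepsilon}-\frac{1-\beta q\,\lambda}{\kappa^{2}}\frac{\partial\kappa}{\partial\varepsilon}.
\]
The first summand is positive by Proposition 4(iii); the second is positive because $\kappa(\varepsilon)$ depends on $\varepsilon$ only through the strictly decreasing factor $(1-\alpha_{L})/(1-\alpha_{L}+\alpha_{L}\varepsilon)$ spelled out in Section 2, and because $1-\beta q\,\lambda>0$ (since $0<\beta q<1$ and $0<\lambda<1$). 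Hence $\partial F_{\pi}(q,\varepsilon)/\partial\varepsilon>0$.

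The principal difficulty here is presentational: choosing the right representation of $F_{\pi}$. Differentiating the more natural form $\lambda\varepsilon/(1-\lambda)$ pits the positive contribution of the explicit $\varepsilon$ against the negative contribution of $\partial\lambda/\partial\varepsilon<0$, and deciding the winner would require a quantitative bound on $|\partial\lambda/\partial\varepsilon|$. The alternative form $(1-\beta q\lambda)/\kappa$ aligns both $\varepsilon$-channels (through $\lambda$ and through $\kappa$) so that each contribution carries the desired sign, reducing the monotonicity claim to a one-line sign check.
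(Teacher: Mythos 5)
Your proof is correct but follows a genuinely different route from the paper's. For positivity under quasi-commitment the paper does not argue from $0<\lambda<1$ directly: it asserts that $F_{\pi}$ is decreasing in $q$ and increasing in $\varepsilon$, reduces positivity to the worst case $q=1$, $\varepsilon\rightarrow1^{+}$, and checks that case algebraically in the appendix, where the inequality collapses to $\kappa_{\max}+\beta>\beta$. Your one-line argument---$F_{\pi}=\varepsilon\lambda/(1-\lambda)>0$ because Proposition 4 places $\lambda$ in $\left]  0,1\right[$---is shorter and avoids the mild circularity of invoking the $\varepsilon$-monotonicity of $F_{\pi}$ before it is established. As for that monotonicity, the paper states it without proof, so your sign decomposition of $\partial F_{\pi}/\partial\varepsilon$ based on the affine representation $F_{\pi}=(1-\beta q\lambda)/\kappa$, with the first channel signed by Proposition 4(iii) and the second by $\partial\kappa/\partial\varepsilon<0$ together with $1-\beta q\lambda>0$, actually supplies a step the paper omits; your remark that the form $\varepsilon\lambda/(1-\lambda)$ pits two opposing effects against each other correctly identifies why the affine form is the right vehicle. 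The only point worth making explicit is that the displayed computation in Proposition 4(iii) differentiates holding $\kappa$ fixed; since $\lambda$ depends on $\varepsilon$ only through the product $\kappa(\varepsilon)\varepsilon$, which the proof of Proposition 4(i) shows to be increasing in $\varepsilon$, the total derivative $\partial\lambda/\partial\varepsilon$ that your decomposition requires is indeed negative. The discretion half ($F_{\pi}(0,\varepsilon)=-\varepsilon$, hence below $-1$ and decreasing) is immediate from Proposition 2 in both treatments.
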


\begin{proof}
For quasi-commitment, the policy rule parameter of the response to inflation
is a decreasing function of credibility $q$ and an increasing function of the
elasticity of substitution $\varepsilon$. To prove that the policy rule is
positive, it is sufficient to prove:%
\[
\underset{q\rightarrow1^{-}}{\lim}\underset{\varepsilon\rightarrow1^{+}}{\lim
}\frac{1}{\kappa}-\frac{\beta q}{\kappa}\left(  \frac{1}{2}\left(  1+\frac
{1}{\beta q}+\frac{\kappa\varepsilon}{\beta q}\right)  -\sqrt{\frac{1}%
{4}\left(  1+\frac{1}{\beta q}+\frac{\kappa\varepsilon}{\beta q}\right)
^{2}-\frac{1}{\beta q}}\right)  >0.
\]
When $q=1$ and when $\varepsilon\rightarrow1^{+}$%
\[
F_{\pi}\left(  q,\varepsilon\right)  >\frac{1}{\kappa_{\max}}-\frac{\beta
}{\kappa_{\max}}\left(  \frac{1}{2}\left(  1+\frac{1}{\beta}+\frac
{\kappa_{\max}}{\beta}\right)  -\sqrt{\frac{1}{4}\left(  1+\frac{1}{\beta
}+\frac{\kappa_{\max}}{\beta}\right)  ^{2}-\frac{1}{\beta}}\right)  >0.
\]
In this case, one shows in the appendix that $F_{\pi}>0$ is equivalent to
$\kappa_{\max}+\beta>\beta$ which is true because $\kappa_{\max}>0$.
\end{proof}

\section{IMPULSE\ RESPONSE\ FUNCTIONS\ AND\ WELFARE}

\subsection{Initial anchor of inflation}

To ensure stability after a cost-push shock, inflation has to jump to the
stable manifold. The size of this initial anchor depends on the elasticity of
substitution between goods and the credibility of the policy maker.

\begin{proposition}
For any value of the elasticity of substitution between goods $\varepsilon>1$,

(i) The initial anchor (or jump) of inflation on the cost-push shock decreases
with the elasticity of substitution between goods for both quasi-commitment
and discretion.

(ii) The initial jump of inflation is an increasing function of the limited
credibility $q$ of the policy maker.

(iii) The initial anchor of near-zero credibility is always strictly smaller
than the initial anchor in the case of zero credibility:
\[
\pi_{0}\left(  q,\varepsilon\right)  =\frac{\lambda\left(  q,\varepsilon
\right)  }{1-\beta q\rho\lambda\left(  q,\varepsilon\right)  }u_{0}\text{
}\leq\text{ }\pi_{0}\left(  0,\varepsilon\right)  =\frac{1}{1-\beta\rho
+\kappa\left(  \varepsilon\right)  \varepsilon}u_{0}.\text{ }%
\]

\end{proposition}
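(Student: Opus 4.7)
The plan is to treat the three claims in turn, each time reducing the assertion about $\pi_{0}$ to a property of the inflation eigenvalue $\lambda(q,\varepsilon)$ already established in Proposition~4 or readable off its characteristic polynomial $p(\mu)=\beta q\mu^{2}-(\beta q+1+\kappa\varepsilon)\mu+1$. A common observation I would record first is that, at fixed $(q,\rho)$, the map $\lambda\mapsto \lambda/(1-\beta q\rho\lambda)$ is strictly increasing, with derivative $1/(1-\beta q\rho\lambda)^{2}>0$, so that sign results for $\pi_{0}$ can be read off sign results for $\lambda$.

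For (i), in the discretion case the formula $\pi_{0}(0,\varepsilon)=1/(1-\beta\rho+\kappa(\varepsilon)\varepsilon)$ and the strict monotonicity of $\kappa(\varepsilon)\varepsilon$ in $\varepsilon$ established in the proof of Proposition~4(i) immediately give the decrease; for quasi-commitment, the increase of $\pi_{0}$ in $\lambda$ combines with Proposition~4(iii) ($\partial\lambda/\partial\varepsilon<0$) to yield $\partial\pi_{0}/\partial\varepsilon<0$. For (iii), I would first prove the upper bound $\lambda(q,\varepsilon)<1/(1+\kappa\varepsilon)$ for $q\in(0,1]$: since $p(0)=1>0$, $p(1/(1+\kappa\varepsilon))=-\beta q\kappa\varepsilon/(1+\kappa\varepsilon)^{2}<0$, and $p$ opens upward, the smaller root lies strictly below $1/(1+\kappa\varepsilon)$. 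Cross-multiplying the desired comparison (both denominators being positive) reduces it to $\lambda(1+\kappa\varepsilon)-1\le \beta\rho\lambda(1-q)$; the left-hand side is strictly negative by the bound and the right-hand side is non-negative, so the claim follows, strictly for every $q\in(0,1]$.

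The main obstacle lies in (ii). I would compute $\partial\lambda/\partial q$ by implicit differentiation of $p(\lambda)=0$, obtaining $\partial\lambda/\partial q=-\beta\lambda(1-\lambda)/[(\beta q+1+\kappa\varepsilon)-2\beta q\lambda]$, with the bracket strictly positive at the smaller root because $p'(\lambda)<0$ there (this also reproves Proposition~4(ii)). A direct calculation gives the cancellation identity $\partial\pi_{0}/\partial q=(\partial\lambda/\partial q+\beta\rho\lambda^{2})/(1-\beta q\rho\lambda)^{2}$, in which the $\beta q\rho\lambda\cdot\partial\lambda/\partial q$ contributions disappear. Using $\lambda(\beta q+1+\kappa\varepsilon)=1+\beta q\lambda^{2}$ to simplify $\lambda\cdot[(\beta q+1+\kappa\varepsilon)-2\beta q\lambda]=1-\beta q\lambda^{2}$, the sign condition $\partial\lambda/\partial q+\beta\rho\lambda^{2}>0$ becomes $\rho(1-\beta q\lambda^{2})>1-\lambda$, and after substituting $\beta q=(1-(1+\kappa\varepsilon)\lambda)/[\lambda(1-\lambda)]$ it reduces to the clean algebraic inequality $\rho\kappa\varepsilon\lambda^{2}>(1-\rho)(1-\lambda)^{2}$. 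The delicate step is verifying this last inequality throughout the admissible range; I plan to close it by using the lower bound on $\lambda$ forced by $\beta q\le 1$, namely $\lambda^{2}-(2+\kappa\varepsilon)\lambda+1\le 0$, which keeps $\lambda$ bounded away from zero, and by noting that $\lambda/(1-\lambda)$ is smallest on the boundary curve $\beta q=1$, so the inequality need only be checked there.
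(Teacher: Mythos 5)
Your parts (i) and (iii) are correct, and (iii) is in fact sharper than what the paper establishes: the paper only computes the limit $\lim_{q\to0^{+}}\pi_{0}(q,\varepsilon)=\lim_{q\to0^{+}}\lambda(q,\varepsilon)$ and compares it with $1/(1-\beta\rho+\kappa\varepsilon)$, whereas your bound $\lambda(q,\varepsilon)<1/(1+\kappa\varepsilon)$ (from $p(0)>0>p(1/(1+\kappa\varepsilon))$ for the upward-opening polynomial) together with the cross-multiplied form $\lambda(1+\kappa\varepsilon)-1\le\beta\rho\lambda(1-q)$ delivers the displayed inequality for \emph{every} $q\in\left]0,1\right]$, not just in the limit. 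Your observation that $\lambda\mapsto\lambda/(1-\beta q\rho\lambda)$ has derivative $1/(1-\beta q\rho\lambda)^{2}>0$, combined with Proposition 4(iii), is exactly the content behind the paper's ``straightforward to check'' in (i).

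The gap is in (ii). Your reduction is algebraically sound: the cancellation $\partial\pi_{0}/\partial q=(\lambda'+\beta\rho\lambda^{2})/(1-\beta q\rho\lambda)^{2}$, the identity $-\lambda'=\beta\lambda^{2}(1-\lambda)/(1-\beta q\lambda^{2})$, and the final form $\rho\kappa\varepsilon\lambda^{2}>(1-\rho)(1-\lambda)^{2}$ all check out. But the concluding inequality is only announced as a plan, and no plan can establish it over the full range of $\rho\in\left]0,1\right[$: it is equivalent to $\lambda/(1-\lambda)>\sqrt{(1-\rho)/(\rho\kappa\varepsilon)}$, whose right-hand side diverges as $\rho\to0^{+}$ while the left-hand side is bounded above by $1/(\kappa\varepsilon)$ because of your own bound $\lambda<1/(1+\kappa\varepsilon)$. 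Indeed, at $\rho=0$ one has $\pi_{0}(q,\varepsilon)=\lambda(q,\varepsilon)$, which is \emph{decreasing} in $q$ by Proposition 4(ii), so claim (ii) fails for small $\rho$. What your reduction actually shows is that (ii) holds precisely when $\rho>(1-\lambda)^{2}/\left[(1-\lambda)^{2}+\kappa\varepsilon\lambda^{2}\right]$ throughout the admissible range; evaluating at the boundary root $\mu_{-}$ of $\mu^{2}-(2+\kappa\varepsilon)\mu+1=0$ confirms this for Gali's calibration $\rho=0.8$ across $\kappa\varepsilon\in\left]\kappa_{\max},\kappa_{\max}/\alpha_{L}\right[$, but you must either state that restriction on $\rho$ explicitly or carry out the boundary check for the calibrated $\rho$ rather than assert the result for all parameters. (The paper's own proof of (ii) is the bare assertion that the sign is ``straightforward to check''; your computation is a genuine improvement in that it exposes this missing hypothesis.)
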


\begin{proof}
(i) It is straightforward to check that:
\[
\frac{\partial\pi_{0}\left(  q,\varepsilon\right)  }{\partial\varepsilon
}<0\text{, }\frac{\partial\pi_{0}\left(  0,\varepsilon\right)  }%
{\partial\varepsilon}<0
\]

For discretion, the anchor of inflation is a decreasing function of
$\kappa\left(  \varepsilon\right)  \varepsilon$ which is an increasing
function of $\varepsilon$. As $\kappa_{\max}<\kappa\left(  \varepsilon\right)
\varepsilon<\frac{\kappa_{\max}}{\alpha_{L}}$, the zero credibility initial
anchor of inflation ($\pi_{0}/u_{0}$) is bounded as follows:%
\[
0<\frac{1}{1-\beta\rho+\frac{\kappa_{\max}}{\alpha_{L}}}<\frac{1}{1-\beta
\rho+\kappa\varepsilon}<\underset{\varepsilon\rightarrow1}{\lim}\frac
{1}{1-\beta\rho+\kappa\varepsilon}=\frac{1}{1-\beta\rho+\kappa_{\max}}\text{.}%
\]

For limited credibility, the anchor of inflation is a decreasing function of
$\kappa\left(  \varepsilon\right)  \varepsilon$ which is an increasing
function of $\varepsilon$. As $\kappa_{\max}<\kappa\varepsilon<\frac
{\kappa_{\max}}{\alpha_{L}}$, the non-zero credibility initial anchor of
inflation ($\pi_{0}/u_{0}$) upper bound is:%
\[
\underset{q\rightarrow1^{-}}{\lim}\underset{\varepsilon\rightarrow1}{\lim
}\frac{\lambda}{1-\beta q\rho\lambda}=\underset{q\rightarrow1^{-}}{\lim
}\underset{\varepsilon\rightarrow1}{\lim}\frac{\lambda}{1-\beta\rho\lambda
}>1\text{,}%
\]
with:%
\[
\underset{q\rightarrow1^{-}}{\lim}\underset{\varepsilon\rightarrow1}{\lim
}\lambda\left(  q,\varepsilon\right)  =\frac{1}{2}\left(  1+\frac{1}{\beta
}+\frac{\kappa_{\max}}{\beta}\right)  -\sqrt{\frac{1}{4}\left(  1+\frac
{1}{\beta}+\frac{\kappa_{\max}}{\beta}\right)  ^{2}-\frac{1}{\beta}}<1.
\]
(ii) It is straightforward to check that:
\[
\frac{\partial\pi_{0}\left(  q,\varepsilon\right)  }{\partial q}>0\text{ for
}q\in\left]  0,1\right]
\]
(iii) The initial anchor of near-zero credibility is always strictly smaller
than the initial anchor in the case of zero credibility. The gap tends to zero
when the auto-correlation of the forcing variable tends to zero and when the
elasticity of substitution tends to one: $\rho\rightarrow0$ and $\varepsilon
\rightarrow1$.%
\[
\underset{q\rightarrow0^{+}}{\lim}\frac{\lambda\left(  q,\varepsilon\right)
}{1-\beta q\rho\lambda\left(  q,\varepsilon\right)  }=\underset{q\rightarrow
0^{+}}{\lim}\lambda\left(  q,\varepsilon\right)  \sim\frac{1}{1+\kappa\left(
\varepsilon\right)  }<\frac{1}{1-\beta\rho+\kappa\left(  \varepsilon\right)
\varepsilon}%
\]
For Gali's [2015] calibration ($\rho=0.8$), for any elasticity of substitution
$\varepsilon>1$ and for any probability of not reneging commitment
$q\in\left]  0,1\right]  $, the zero credibility initial anchor of inflation
is much higher than the one with minimal credibility.
\end{proof}

\subsection{Impulse response functions}

The values of the parameters are taken from Gali's [2015] calibration:
$\rho=0.8$, $\beta=0.99$, $\varepsilon=6$, $\kappa=0.1275$ obtained with
$\theta=2/3$, $1-\alpha_{L}=2/3$, $\sigma=1$ and $\varphi=1$. Expected impulse
response functions are shown in figure 1 for four different degrees of
credibility $q$: $0$ (discretion (Gali [2015]), $10^{-7}$ (near-zero
credibility), $0.5$ (limited credibility), $1$ (commitment, Gali [2015]).

\textbf{INSERT\ FIGURE\ 1\ HERE.}

The parameters of the inflation dynamics change marginally between $q=1$ and
$q=10^{-7}$. Inflation eigenvalue increases from $\lambda=0.43$ to $0.57$.
Inflation sensitivity with a lagged cost-push shock shifts from $-0.13$ to
$-0.08$. Inflation initial anchor on a cost-push shock shifts from $0.65$ to
$0.57$.

By contrast, the shifts from near-zero credibility $q=10^{-7}$ to zero
credibility $q=0$ are large. Inflation eigenvalue increases from
$\lambda=0.57$ to $1.78$ (multiplied by $3$, crossing the bifurcation value
$1$). Inflation sensitivity with a lagged cost-push shock shifts from $-0.08$
to $-1.01$ (multiplied by $12$). Inflation initial anchor on a cost-push shock
shifts from $0.57$ to $1.03$ (multiplied by $1.8$).

The impulse response function of inflation of zero credibility is markedly
above the impulse response functions of inflation with limited credibility,
including near-zero credibility.

To check the lack of robustness to misspecification of discretion, we compute
two impulse response functions facing a $\pm10\%$ error of the initial anchor
of inflation. For quasi-commitment with near-zero credibility ($q=10^{-7}$),
the error gap of $10\%$ with respect to the perfect knowledge optimal path at
the initial date is reduced to less than $1\%$ after eight quarters (figure
2). For discretion ($q=0$), the error gap of $10\%$ with respect to the
perfect knowledge optimal path at the initial date is increased to $110\%$
after four quarters due to inflation and deflation spirals in sharp contrast
with near-zero credibilility $q=10^{-7}$ paths with $\pm10\%$ initial error
also represented on figure 3.

\textbf{INSERT\ FIGURE\ 2\ AND\ FIGURE\ 3\ HERE.}

\subsection{Welfare Losses}

We denote the welfare of discretionary policy $W(0)$. It is usually computed
using households' discount factor of $\beta=0.99$ instead of policy maker's
discount factor $\beta q=0$:%

\begin{align*}
W(0)  &  =-\frac{1}{2}\sum_{t=0}^{t=+\infty}\beta^{t}\left(  \pi_{t}^{2}%
+\frac{\kappa}{\varepsilon}x_{t}^{2}\right)  =-\frac{1}{2}\left(
1+\frac{\kappa}{\varepsilon}\varepsilon^{2}\right)  \left(  \frac{1}%
{1+\kappa\varepsilon-\beta\rho}\right)  ^{2}\sum_{t=0}^{t=+\infty}\beta
^{t}\left(  \rho^{t}u_{0}\right)  ^{2}\\
W(0)  &  =-\frac{1}{2}\frac{1+\kappa\varepsilon}{\left(  1+\kappa
\varepsilon-\beta\rho\right)  ^{2}}\frac{u_{0}^{2}}{1-\beta\rho^{2}%
}=-5.\allowbreak09.
\end{align*}

In table 1, for comparison with the welfare of discretionary policy, the
limited credibility welfare is computed using households' discount factor of
$\beta=0.99$ instead of policy maker's discount factor $\beta q$. We simulate
the model over $200$ periods in order to compute welfare for different
elasticities and different levels of credibility. One can also compute welfare
losses of Ramsey optimal policy solving a Riccati equation (Chatelain and Ralf
[2020b]).\bigskip

\textbf{Table 1:} Welfare loss in percentage of welfare loss with infinite
horizon commitment ($w(q)=\frac{W(q)}{W(1)}-1$, $\beta=0.99$) when varying the
elasticity of substitution $\varepsilon$ and credibility $q$%

\begin{tabular}
[c]{|l|l|l|l|l|l|l|l|l|}\hline
- & - & - & $q=1$ & $0.8$ & $0.5$ & $0.1$ & $10^{-7}$ & $0$\\\hline
$\varepsilon$ & $\kappa\left(  \varepsilon\right)  $ & $\frac{\kappa\left(
\varepsilon\right)  }{\varepsilon}$ & $W\left(  1\right)  $ & $w(q)$ & $w(q)$
& $w(q)$ & $w(q)$ & $w(q)$\\\hline
$3193$ & $0.00032$ & $10^{-7}$ & $-2.119$ & $2.8\%$ & $6.8\%$ & $10.8\%$ &
$2.1\%$ & $73\%$\\\hline
$6$ & $0.1275$ & $0.02125$ & $-2.688$ & $3.2\%$ & $7.4\%$ & $10.9\%$ &
$0.03\%$ & $89\%$\\\hline
$2.35$ & $0.235$ & $0.1$ & $-3.489$ & $3.6\%$ & $7.8\%$ & $10.2\%$ & $8.6\%$ &
$111\%$\\\hline
$1.001$ & $0.34$ & $0.34$ & $-7.971$ & $4.1\%$ & $7.8\%$ & $7\%$ & $23.6\%$ &
$141\%$\\\hline
\end{tabular}

\bigskip

Because there is a wide gap between the large impulse response functions of
zero-credibility $q=0$ with respect to near zero credibility $q=10^{-7}$, the
welfare gap between near-zero versus zero credibility is also gigantic: from
$71\%$ if $\varepsilon=3193$ to $117\%$ when $\varepsilon$ tends to one.

When considering only limited credibility cases, the losses with respect to
infinite horizon commitment are at most an increase of $24\%$ of welfare
losses in the limit case of the elasticity of substitution tending to $1$,
(corresponding to a large relative weight on output gap in the loss function
of $0.34$) for all the range of non-zero probabilities of reneging commitment.

\section{CONCLUSION}

When the probability to renege commitment tends to zero, the quasi-commitment
equilibrium is never the limit of the equilibrium under discretion where the
probability to renege commitment is exactly zero. The discretion equilibrium
is obtained by static optimization of the policy maker of an otherwise
dynamics transmission mechanism, which implies inflation or deflation spirals
as soon as the parameters of the economy are not perfectly known.

FollowingOnce it is clear that the discretion equilibrium is not a relevant
theory for stabilization policy, the empirical issue is the measurement of the
sign of the slope of the new-Keynesian Phillips curve. If it turns out to be
negative, the transmission mechanism corresponds to an accelerationist
Phillips curve instead of the new-Keynesian Phillips curve. Unfortunately, for
the U.S., nearly $50\%$ of the estimates have a positive sign in a large
number of estimations done by Mavroeidis et al. [2014]. Once this sign is
known, the optimal response of the policy instrument to inflation will have
the opposite sign under quasi-commitment.

\section{APPENDIX}

The new-Keynesian Phillips curve can be written as a function of the Lagrange
multiplier where $\kappa>0$, $0<\beta<1$ and $0<q<1$:
\[
E_{t}\pi_{t+1}+\frac{\kappa\varepsilon}{\beta q}\gamma_{t+1}=\frac{1}{\beta
q}\pi_{t}-\frac{1}{\beta q}u_{t}-\frac{1-q}{q}E_{t}\pi_{t+1}^{j}\text{ }%
\]

We keep the notation of Gali [2015] chapter 5 of the Lagrange multiplier with
one step ahead subscript $\gamma_{t+1}$.The Hamiltonian system is:%

\[
\left(
\begin{array}
[c]{ccc}%
1 & \frac{\kappa\varepsilon}{\beta q} & 0\\
0 & 1 & 0\\
0 & 0 & 1
\end{array}
\right)  \left(
\begin{array}
[c]{c}%
\pi_{t+1}\\
\gamma_{t+1}\\
u_{t+1}%
\end{array}
\right)  =\left(
\begin{array}
[c]{ccc}%
\frac{1}{\beta q} & 0 & \frac{-1}{\beta q}\\
-1 & 1 & 0\\
0 & 0 & \rho
\end{array}
\right)  \left(
\begin{array}
[c]{c}%
\pi_{t}\\
\gamma_{t}\\
u_{t}%
\end{array}
\right)  +\left(
\begin{array}
[c]{c}%
-\frac{1-q}{q}E_{t}\pi_{t+1}^{j}\\
0\\
0
\end{array}
\right)  .
\]

This leads to:%

\[
\left(
\begin{array}
[c]{c}%
\pi_{t+1}\\
\gamma_{t+1}\\
u_{t+1}%
\end{array}
\right)  =\left(
\begin{array}
[c]{ccc}%
\frac{1}{\beta q}+\frac{\kappa\varepsilon}{\beta q} & -\frac{\kappa
\varepsilon}{\beta q} & -\frac{1}{\beta q}\\
-1 & 1 & 0\\
0 & 0 & \rho
\end{array}
\right)  \left(
\begin{array}
[c]{c}%
\pi_{t}\\
\gamma_{t}\\
u_{t}%
\end{array}
\right)  +\left(
\begin{array}
[c]{c}%
-\frac{1-q}{q}E_{t}\pi_{t+1}^{j}\\
0\\
0
\end{array}
\right)  .
\]

The characteristic polynomial of the upper square matrix of the Hamiltonian
system (with a determinant equal to $\frac{1}{\beta q}$) is:%

\[
\lambda^{2}-\left(  1+\frac{1}{\beta q}+\frac{\kappa\varepsilon}{\beta
q}\right)  \lambda+\frac{1}{\beta q}=0.
\]

The Hamiltonian matrix has two stable roots $\rho$ and $\lambda$ and one
unstable root $\frac{1}{\beta q\lambda}$:
\[
\lambda=\frac{1}{2}\left(  1+\frac{1}{\beta q}+\frac{\kappa\varepsilon}{\beta
q}-\sqrt{\left(  1+\frac{1}{\beta q}+\frac{\kappa\varepsilon}{\beta q}\right)
^{2}-\frac{4}{\beta q}}\right)  <\sqrt{\frac{1}{\beta q}}<\frac{1}{\beta
q\lambda}.
\]

Policy rule parameter function of $\lambda\left(  \varepsilon\right)  $ and
$\varepsilon$:%

\begin{align*}
\left(  1-\lambda\right)  \left(  1-\frac{1}{\beta q\lambda}\right)   &
=-\frac{\kappa\varepsilon}{\beta q}\Longrightarrow\left(  \frac{1-\lambda
}{\beta q\lambda}\right)  \left(  \frac{\beta q\lambda-1}{\kappa}\right)
=-\frac{\varepsilon}{\beta q}\Longrightarrow\\
F_{\pi}  &  =\frac{1-\beta q\lambda}{\kappa}=\left(  \frac{\lambda}{1-\lambda
}\right)  \varepsilon.
\end{align*}

The Hamiltonian system can be written as a function of the stable eigenvalue
$\lambda$, after eliminating $\varepsilon$:%

\[
\left(
\begin{array}
[c]{c}%
\pi_{t+1}\\
\gamma_{t+1}\\
u_{t+1}%
\end{array}
\right)  =\left(
\begin{array}
[c]{ccc}%
\lambda+\frac{1}{\beta q\lambda}-1 & 1+\frac{1}{\beta q}-\lambda-\frac
{1}{\beta q\lambda} & -\frac{1}{\beta q}\\
-1 & 1 & 0\\
0 & 0 & \rho
\end{array}
\right)  \left(
\begin{array}
[c]{c}%
\pi_{t+1}\\
\gamma_{t+1}\\
u_{t+1}%
\end{array}
\right)  .
\]

\begin{proposition}
Rule parameters $P_{u}$ and $P_{z}$ of the response of the Lagrange multiplier
on inflation to exogenous variables:%
\begin{align}
\gamma_{t}  &  =P_{\pi}\pi_{t}+P_{u}u_{t}\text{ }\\
P_{\pi}  &  =\frac{1}{1-\lambda}>0\text{, }P_{u}=\frac{1}{1-\lambda}%
\frac{\frac{1}{\beta q}}{\rho-\frac{1}{\beta q\lambda}}=\frac{1}{1-\lambda
}\frac{\lambda}{\beta q\lambda\rho-1}<0\text{.}%
\end{align}

\end{proposition}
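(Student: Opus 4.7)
The plan is to derive the two rule parameters by combining the first-order condition that links $\gamma_t$ to the policy instrument with the policy rule already established for $x_t$. From the FOC on the output gap derived earlier, $x_t = \varepsilon \gamma_{t+1}$, together with the Euler condition $\gamma_{t+1} = \gamma_t - \pi_t$, one obtains the identity
\[
\gamma_t = \pi_t + \frac{1}{\varepsilon}\, x_t.
\]
Substituting the optimal policy rule $x_t = F_{\pi}\pi_t + F_u u_t$ with the closed forms $F_{\pi} = \frac{\lambda}{1-\lambda}\varepsilon$ and $F_u = -\frac{1}{1-\beta q\rho\lambda}F_{\pi}$ then yields
\[
\gamma_t = \Bigl(1 + \tfrac{F_{\pi}}{\varepsilon}\Bigr)\pi_t + \tfrac{F_u}{\varepsilon} u_t,
\]
so that by construction $P_{\pi}=1+F_{\pi}/\varepsilon$ and $P_u=F_u/\varepsilon$. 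The next step is a direct algebraic simplification: $1+\lambda/(1-\lambda)=1/(1-\lambda)$ gives the claimed expression for $P_{\pi}$, and multiplying the expression for $F_u$ by $1/\varepsilon$ gives $P_u = -\lambda/\bigl[(1-\lambda)(1-\beta q\rho\lambda)\bigr]$, which is the advertised formula after factoring $1/(1-\lambda)$ and rewriting the second factor as $\lambda/(\beta q\lambda\rho-1)$.

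The sign statements are then immediate from the bounds established in Proposition 4. Since $0<\lambda(q,\varepsilon)<1$ for $q\in\,]0,1]$, we have $P_{\pi}=1/(1-\lambda)>0$. For $P_u$, the product $\beta q\rho\lambda$ is strictly less than $1$ (each of the four factors lies in $]0,1[$), hence $\beta q\lambda\rho-1<0$, while $\lambda>0$ and $1-\lambda>0$, giving $P_u<0$.

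As a robustness check, I would verify the formulas a second way, by substituting the ansatz $\gamma_t=P_{\pi}\pi_t+P_u u_t$ directly into the middle row of the Hamiltonian system displayed just before the proposition, namely $\gamma_{t+1}=\gamma_t-\pi_t$, together with the stable-manifold dynamics $\pi_{t+1}=\lambda\pi_t+\Lambda u_t$ (for some coefficient $\Lambda$) and $u_{t+1}=\rho u_t$; matching coefficients on $\pi_t$ yields $P_{\pi}(\lambda-1)=-1$, i.e.\ $P_{\pi}=1/(1-\lambda)$, and matching coefficients on $u_t$ produces a linear equation in $P_u$ whose solution coincides with the formula above.

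The main obstacle is purely notational bookkeeping: one must be careful that the paper uses the convention $\gamma_{t+1}$ for the multiplier attached to the Phillips-curve constraint dated $t$, so the FOC reads $x_t=\varepsilon\gamma_{t+1}=\varepsilon(\gamma_t-\pi_t)$ rather than $x_t=\varepsilon\gamma_t$; getting the timing right is what makes the coefficient $1+F_{\pi}/\varepsilon$ (and not $F_{\pi}/\varepsilon$) appear in front of $\pi_t$. Once this is sorted out, everything reduces to the elementary simplification above.
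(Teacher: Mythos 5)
Your algebra is correct and the formulas and signs all check out, but your primary argument runs in the opposite logical direction from the paper's, and within the paper's own structure that direction is circular. The paper's appendix proves this proposition from first principles by the method of undetermined coefficients: it substitutes the ansatz $\gamma_{t}=P_{\pi}\pi_{t}+P_{u}u_{t}$ into the Hamiltonian system, equates the two resulting expressions for $\pi_{t+1}$ (one from the costate row $\gamma_{t+1}=\gamma_{t}-\pi_{t}$ combined with $u_{t+1}=\rho u_{t}$, one from the Phillips-curve row), and reads off $P_{\pi}$ and $P_{u}$; only in the \emph{following} proposition does it obtain $F_{\pi}=\varepsilon(P_{\pi}-1)$ and $F_{u}=\varepsilon P_{u}$ from the first-order condition $x_{t}=\varepsilon(\gamma_{t}-\pi_{t})$. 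You use that same identity but in reverse, taking the closed forms of $F_{\pi}$ and in particular $F_{u}=\frac{-1}{1-\beta q\rho\lambda}F_{\pi}$ as given and dividing by $\varepsilon$. Since the only derivation of $F_{u}$ inside this paper goes through $P_{u}$ (the main text merely asserts it, citing Chatelain and Ralf [2019]), your main route proves the proposition only modulo an external result that is logically downstream of it here. Your ``robustness check'' is in fact the paper's actual proof and should be promoted to the main argument; note, however, that as you sketch it, matching coefficients on $u_{t}$ in the costate row alone yields one equation in the two unknowns $P_{u}$ and $\Lambda$ (namely $P_{\pi}\Lambda+\rho P_{u}=P_{u}$), so you still need the first (Phillips-curve) row of the Hamiltonian system to eliminate $\Lambda$ and pin down $P_{u}=\frac{1}{1-\lambda}\frac{\lambda}{\beta q\lambda\rho-1}$. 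The sign discussion and the observation about the timing convention $x_{t}=\varepsilon\gamma_{t+1}$ are both correct and match the paper.
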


\begin{proof}
The solution stabilizes the state-costate vector for any initial value of
inflation $\pi_{0}$ and of the exogenous variables $u_{0}$ in a stable
subspace of dimension two within a space of dimension three ($\pi_{t}%
,\gamma_{t},u_{t}$) of the Hamiltonian system. We seek a characterization of
the Lagrange multiplier $\gamma_{t}$ of the form:%
\[
\gamma_{t}=P_{\pi}\pi_{t}+P_{u}u_{t}.\text{ }%
\]
To deduce the control law associated with vector $\left(  P_{\pi}%
,P_{u}\right)  $, we substitute it into the Hamiltonian system:%
\begin{align*}
&  \left(
\begin{array}
[c]{c}%
\pi_{t+1}\\
P_{\pi}\pi_{t+1}+P_{u}u_{t+1}\\
u_{t+1}%
\end{array}
\right) \\
&  =\left(
\begin{array}
[c]{ccc}%
\frac{1}{\beta q}-\left(  1-\lambda\right)  \left(  1-\frac{1}{\beta q\lambda
}\right)  & \left(  1-\lambda\right)  \left(  1-\frac{1}{\beta q\lambda
}\right)  & -\frac{1}{\beta q}\\
-1 & 1 & 0\\
0 & 0 & \rho
\end{array}
\right)  \left(
\begin{array}
[c]{c}%
\pi_{t}\\
P_{\pi}\pi_{t}+P_{u}u_{t}\\
u_{t}%
\end{array}
\right)  .
\end{align*}
We write the last two equations in this system separately:%
\begin{align*}
P_{\pi}\pi_{t+1}+P_{u}u_{t+1}  &  =\left(  P_{\pi}-1\right)  \pi_{t}%
+P_{u}u_{t}\\
u_{t+1}  &  =\rho u_{t}.
\end{align*}
It follows that:%
\[
\pi_{t+1}=\frac{P_{\pi}-1}{P_{\pi}}\pi_{t}+\frac{\left(  1-\rho\right)  P_{u}%
}{P_{\pi}}u_{t}.
\]
The first equation is such that:%
\[
\pi_{t+1}=\left[  \frac{1}{\beta q}-\left(  1-\lambda\right)  \left(
1-\frac{1}{\beta q\lambda}\right)  \right]  \pi_{t}+\left(  1-\lambda\right)
\left(  1-\frac{1}{\beta q\lambda}\right)  \left(  P_{\pi}\pi_{t}+P_{u}%
u_{t}\right)  -\frac{1}{\beta q}u_{t}.
\]
Factorizing:%
\[
\pi_{t+1}=\left[  \frac{1}{\beta q}-\left(  1-\lambda\right)  \left(
1-\frac{1}{\beta q\lambda}\right)  +\left(  1-\lambda\right)  \left(
1-\frac{1}{\beta q\lambda}\right)  P_{\pi}\right]  \pi_{t}+\left[  \left(
1-\lambda\right)  \left(  1-\frac{1}{\beta q\lambda}\right)  P_{u}-\frac
{1}{\beta q}\right]  u_{t}.
\]
The method of undetermined coefficients implies for the first term:%
\begin{align*}
\frac{P_{\pi}-1}{P_{\pi}}  &  =\frac{1}{\beta q}+\left(  1-\lambda\right)
\left(  1-\frac{1}{\beta q\lambda}\right)  \left(  P_{\pi}-1\right)  ,\\
P_{\pi}  &  =\frac{1}{1-\lambda}.
\end{align*}
For the second term:%
\begin{align*}
\frac{\left(  1-\rho\right)  P_{u}}{P_{\pi}}  &  =\left(  1-\lambda\right)
\left(  1-\frac{1}{\beta q\lambda}\right)  P_{u}-\frac{1}{\beta q}%
\Rightarrow\\
\frac{1}{\beta q}  &  =\left(  1-\frac{1}{\beta q\lambda}-1+\rho\right)
\left(  1-\lambda\right)  P_{u}\Rightarrow\\
P_{u}  &  =\frac{1}{1-\lambda}\frac{\frac{1}{\beta q}}{\rho-\frac{1}{\beta
q\lambda}}\Rightarrow\frac{P_{u}}{P_{\pi}}=\frac{\frac{1}{\beta q}}{\rho
-\frac{1}{\beta q\lambda}}=\frac{-\lambda}{1-\lambda\beta q\rho}.
\end{align*}

\end{proof}

\begin{proposition}
Optimal policy rule parameters are given by:%
\begin{align*}
F_{\pi}  &  =\varepsilon\left(  P_{\pi}-1\right)  =\lambda\varepsilon P_{\pi
}=\varepsilon\frac{\lambda}{1-\lambda}=\frac{1-\beta\lambda}{\kappa}\text{
,}\\
F_{u}  &  =\varepsilon P_{u}=\varepsilon P_{\pi}\frac{\lambda}{\beta
\lambda\rho-1}=\varepsilon\frac{1}{1-\lambda}\frac{\lambda}{\beta\lambda
\rho-1}\text{ ,}\\
\frac{F_{u}}{F_{\pi}}  &  =A=\frac{1}{\lambda}\frac{P_{u}}{P_{\pi}}=\frac
{1}{\beta\lambda\rho-1}=\frac{P_{u}}{P_{\pi}-1}=-1+\beta\rho\frac{P_{u}%
}{P_{\pi}}.
\end{align*}

\end{proposition}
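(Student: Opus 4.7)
The strategy is simply to translate the Lagrange-multiplier representation $\gamma_{t}=P_{\pi}\pi_{t}+P_{u}u_{t}$ established in the preceding proposition into the instrument representation $x_{t}=F_{\pi}\pi_{t}+F_{u}u_{t}$, using the two first-order conditions from Section~2. Combining those conditions gave $x_{t}=\varepsilon\gamma_{t+1}=\varepsilon(\gamma_{t}-\pi_{t})$, so I would substitute the $P$-rule into this identity to obtain
\[
x_{t}=\varepsilon\bigl(P_{\pi}-1\bigr)\pi_{t}+\varepsilon P_{u}\,u_{t},
\]
and read off $F_{\pi}=\varepsilon(P_{\pi}-1)$ and $F_{u}=\varepsilon P_{u}$ by identification.

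\medskip

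\noindent The next step is to insert $P_{\pi}=1/(1-\lambda)$ from the previous proposition. This immediately gives
\[
F_{\pi}=\varepsilon\left(\tfrac{1}{1-\lambda}-1\right)=\varepsilon\,\tfrac{\lambda}{1-\lambda}=\lambda\varepsilon P_{\pi},
\]
which accounts for three of the four listed forms of $F_{\pi}$. The fourth form, $F_{\pi}=(1-\beta q\lambda)/\kappa$, is obtained by exploiting that $\lambda$ is a root of the characteristic polynomial displayed earlier in the appendix, which factorizes as $(1-\lambda)\bigl(1-\tfrac{1}{\beta q\lambda}\bigr)=-\tfrac{\kappa\varepsilon}{\beta q}$. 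Multiplying this through by $\beta q\lambda/(1-\lambda)$ yields $\kappa\varepsilon\lambda/(1-\lambda)=1-\beta q\lambda$, i.e.\ $\kappa F_{\pi}=1-\beta q\lambda$, as claimed.

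\medskip

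\noindent For the ratio $A=F_{u}/F_{\pi}$ I would simply divide: using $P_{\pi}-1=\lambda P_{\pi}$ (which follows from $P_{\pi}=1/(1-\lambda)$) gives
\[
A=\frac{\varepsilon P_{u}}{\varepsilon(P_{\pi}-1)}=\frac{P_{u}}{P_{\pi}-1}=\frac{1}{\lambda}\,\frac{P_{u}}{P_{\pi}}.
\]
Plugging in the previously computed $P_{u}/P_{\pi}=-\lambda/(1-\beta q\lambda\rho)$ then gives $A=1/(\beta q\lambda\rho-1)$. The last listed form, $A=-1+\beta\rho P_{u}/P_{\pi}$, follows by rewriting the common denominator after substituting $P_{u}/P_{\pi}$.

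\medskip

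\noindent There is no real obstacle here: the proposition is a bookkeeping corollary of (a) the first-order conditions already derived in Section~2 and (b) the Lagrange-multiplier rule parameters $P_{\pi},P_{u}$ computed in the previous proposition. The only step that requires a small observation rather than direct substitution is the passage to $F_{\pi}=(1-\beta q\lambda)/\kappa$, which needs the factorized form of the characteristic polynomial rather than its raw quadratic form; but this factorization is already displayed in the appendix.
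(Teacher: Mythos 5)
Your proposal is correct and follows essentially the same route as the paper: the paper's own proof consists precisely of substituting $\gamma_{t}=P_{\pi}\pi_{t}+P_{u}u_{t}$ into the first-order condition $x_{t}=\varepsilon(\gamma_{t}-\pi_{t})$ and identifying $F_{\pi}=\varepsilon(P_{\pi}-1)$, $F_{u}=\varepsilon P_{u}$, with the remaining equalities (via $P_{\pi}=1/(1-\lambda)$ and the factorized characteristic polynomial) already displayed earlier in the appendix. If anything, your write-up is slightly more complete than the paper's, and your form $F_{\pi}=(1-\beta q\lambda)/\kappa$ correctly restores the $q$ that the proposition's display drops.
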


\begin{proof}
The first order condition relates the Lagrange multiplier to the policy
instrument:%
\begin{align*}
x_{t}  &  =\varepsilon\gamma_{t+1}=\varepsilon(\gamma_{t}-\pi_{t})\\
x_{t}  &  =F_{\pi}\pi_{t}+F_{u}u_{t}=\varepsilon(\gamma_{t}-\pi_{t}%
)=\varepsilon(P_{\pi}\pi_{t}+P_{u}u_{t}-\pi_{t})\Rightarrow\\
F_{\pi}  &  =\varepsilon(P_{\pi}-1]\text{, }F_{u}=\varepsilon P_{u}\text{ .}%
\end{align*}

\end{proof}

Furthermore, we can find the following lower bound for the policy rule parameter.%

\begin{align*}
\text{If }1-\frac{1}{2}\left(  \beta q+1+\kappa\right)  +\sqrt{\frac{1}%
{4}\left(  \beta q\right)  ^{2}\left(  1+\frac{1}{\beta q}+\frac{\kappa}{\beta
q}\right)  ^{2}-\beta q}  &  >0\text{ }\Leftrightarrow\sqrt{\frac{1}{4}\left(
1+\beta q+\kappa\right)  ^{2}-\beta q}>\frac{1}{2}(-1+\beta q+\kappa)\\
\frac{1}{4}\left(  1+\beta q+\kappa\right)  ^{2}-\beta q  &  >\frac{1}%
{4}(-1+\beta q+\kappa)^{2}\\
\left(  1+\beta q+\kappa\right)  ^{2}-(-1+\beta q+\kappa)^{2}  &
=\allowbreak4\left(  \kappa+\beta q\right)  >4\beta q\\
\kappa+\beta q  &  >\beta q\text{ which is true.}%
\end{align*}

\textbf{Proof of proposition 4 (ii)}: The larger the discount rate, the lower
the discount factor, the less we weight the present, the larger the speed of
convergence, the lower the inflation persistence eigenvalue, which is the root
of the characteristic polynomial:%

\begin{align*}
\lambda^{2}-\left(  1+\frac{1}{\beta q}+\frac{\kappa\varepsilon}{\beta
q}\right)  \lambda+\frac{1}{\beta q}  &  =0.\\
\lambda^{2}-\left(  1+bP\right)  \lambda+P  &  =0.
\end{align*}

With our notations:%
\begin{align*}
S  &  >1\text{, }b=1+\kappa\varepsilon>1\text{, }P=\frac{1}{\beta q}\geq1\\
\Delta &  >0\Leftrightarrow\left(  1+bP\right)  ^{2}>4P
\end{align*}

Let's prove that the inflation eigenvalue is an increasing function of the
product of the two roots, which is an inverse function of the discount factor
$\beta q$. When proven true, the inflation eigenvalue is a decreasing function
of the discount factor $\beta q$.%

\begin{align*}
sign\left\{  \lambda^{\prime}\left(  P\right)  \right\}   &  =sign\left\{
b-\frac{2b\left(  Pb+1\right)  -4\allowbreak}{2\sqrt{\left(  Pb+1\right)
^{2}-4P}}\allowbreak\right\} \\
&  =sign\left\{  \sqrt{\left(  Pb+1\right)  ^{2}-4P}-\left(  Pb+1\right)
+\frac{2\allowbreak}{b}\right\} \\
&  =sign\left\{  \sqrt{\left(  Pb+1\right)  ^{2}-4P}-\left(  Pb+1\right)
+\frac{2\allowbreak}{b}\right\} \\
&  =sign\left\{  -2\lambda\left(  P\right)  +\frac{2\allowbreak}{b}\right\} \\
\text{ }\lambda^{\prime}\left(  P\right)   &  >0\Leftrightarrow b\lambda
\left(  P\right)  <1\text{ for }b>0\text{ and }\lambda\left(  P\right)  >0
\end{align*}

We use the classic functional inequality:%

\begin{align*}
\sqrt{1+x}  &  <1+\frac{1}{2}x\text{ for }x\geq-1\Rightarrow\\
b\lambda &  =b\frac{1+bP}{2}\left(  1-\sqrt{1-\frac{4P}{\left(  1+bP\right)
^{2}}}\right) \\
&  \leq b\frac{1+bP}{2}\frac{1}{2}\frac{4P}{\left(  1+bP\right)  ^{2}}%
=\frac{bP}{bP+1}<1
\end{align*}

QED.

\end{document}